\author{
\alignauthor
{Xuebin Su \qquad Hongzhi Wang \qquad Jianzhong Li \qquad Hong Gao}\\
\affaddr{Harbin Institute of Technology}\\
\email{xvebinsu@gmail.com \qquad \{wangzh, lijzh, honggao\}@hit.edu.cn}
}
\newtheorem{theorem}{Theorem}
\newtheorem{lemma}[theorem]{Lemma}
\newtheorem{proposition}[theorem]{Proposition}
\newcommand{\myref}[2]{\hyperref[#2]{#1~\ref{#2}}}
\newcommand{\prob}[1]{\operatorname{Pr}\left\lbrace#1\right\rbrace}
\newcommand{\expect}[1]{\operatorname{E} #1 }
\newcommand{\cov}[1]{\operatorname{Cov} #1 }
\newcommand{\dist}[1]{d\left(#1\right)}
\newcommand{\distext}[2]{d{#2}\left(#1\right)}
\newcommand{\irv}[1]{\operatorname{I}\left\lbrace#1\right\rbrace}
\newcommand{\func}[2]{\operatorname{\textsc{#1}}\left(#2\right)}
\newcommand{\trans}{^{\mathsf{T}}}
\newcommand{\argmax}{\operatorname*{arg\,max}}
\newcommand{\argmin}{\operatorname*{arg\,min}}
\newcommand{\tr}[1]{\operatorname{tr}\paren{#1}}
\newcommand{\mean}{\operatorname{\mathbb{E}}}
\newcommand{\paren}[1]{\left(#1\right)}
\newcommand{\bracket}[1]{\left[#1\right]}
\newcommand{\set}[1]{\left\lbrace#1\right\rbrace}
\newcommand{\norm}[1]{\left\lVert#1\right\rVert}
\newcommand{\seqsub}[3]{#1_1 #2 #1_2 #2 \ldots #2 #1_{#3}}
\newcommand{\cseqsub}[3]{#1_1 #2 #1_2 #2 \cdots #2 #1_{#3}}
\newcommand{\seqsup}[3]{#1^{(1)} #2 #1^{(2)} #2 \ldots #2 #1^{(#3)}}
\newcommand{\cseqsup}[3]{#1^{(1)} #2 #1^{(2)} #2 \cdots #2 #1^{(#3)}}
\renewcommand{\vec}[1]{\boldsymbol{\mathbf{#1}}}
\newcommand{\irow}[1]{\begin{pmatrix}#1\end{pmatrix}\trans}
\newcommand{\header}[1]{\noindent\textbf{#1}}
\newcommand{\mlterr}{\errsub{L^2}}
\newcommand{\sngerr}{\dist{\hat\theta, \theta}}
\newcommand{\errsub}[1]{\distext{\vec{\hat\theta}, \vec{\theta}}{_{#1}}}
\newcommand{\mltmod}{H \left(\vec{n}; \vec{\beta}\right)}
\newcommand{\modsup}[2]{H \left(\vec{n}^{(#1)}; \vec{\hat\beta}^{(#2)}\right)}
\begin{document}
\title{MISS: Finding Optimal Sample Sizes for Approximate Analytics}
\maketitle

\abstract{}
    Nowadays, sampling-based Approximate Query Processing (AQP) is widely
    regarded as a promising way to achieve interactivity in big data analytics.
    To build such an AQP system, finding the minimal sample size for a query
    regarding given error constraints in general, called Sample Size
    Optimization (SSO), is an essential yet unsolved problem. Ideally, the goal
    of solving the SSO problem is to achieve statistical accuracy, computational
    efficiency and broad applicability all at the same time. Existing approaches
    either make idealistic assumptions on the statistical properties of the
    query, or completely disregard them. This may result in overemphasizing only
    one of the three goals while neglect the others.

    To overcome these limitations, we first examine carefully the statistical
    properties shared by common analytical queries. Then, based on the
    properties, we propose a linear model describing the relationship between
    sample sizes and the approximation errors of a query, which is called the
    \emph{error model}. Then, we propose a Model-guided Iterative Sample
    Selection (MISS) framework to solve the SSO problem generally. Afterwards,
    based on the MISS framework, we propose a concrete algorithm, called
    \textsc{$L^2$Miss}, to find optimal sample sizes under the $L^2$ norm error
    metric. Moreover, we extend the \textsc{$L^2$Miss} algorithm to handle other
    error metrics. Finally, we show theoretically and empirically that the
    \textsc{$L^2$Miss} algorithm and its extensions achieve satisfactory
    accuracy and efficiency for a considerably wide range of analytical queries.
\endabstract{}

\section{Introduction}
\label{sec:introduction}

In the Big Data era, analyzing large volumes of data by analytical queries,
which summarize the data to discover useful information, becomes extremely
challenging for data scientists. This is mainly because under such circumstance,
it is infeasible to check the data record-by-record manually. Therefore, people
are always hoping to design systems to fully automate this process. However,
none of these efforts perfectly solves this problem. Therefore, recent years
have witnessed a surge of interest in designing novel systems to interact with
humans. By leveraging human knowledge, far more insights can be obtained from
the data with analytical queries.

However, the goal of incorporating human intelligence in analytical queries
brings new challenges. One of them that is the most essential yet elusive is
\emph{interactivity}~\cite{DBLP:conf/sigmod/Mozafari17}. To cope with it, one of
the most effective approach is to reduce the amount of data to be processed by
sampling, which is called sampling-based Approximate Query Processing
(AQP)~\cite{DBLP:conf/sigmod/Mozafari17}. The use cases for sampling-based AQP
typically include helping users to obtain a quick understanding of the data.
Such understanding may be inaccurate, but can be of great help for later
complicated and mission-critical analytical tasks.

When running an approximate analytical query, users would like to ensure that
the approximate result and the true one is almost the same. The difference
between these two results is called the approximation error, which is measured
by an \emph{error metric} chosen according to different scenarios, such as the
$L^2$ norm. To lower query latency and achieve interactivity for the query, we
hope to reduce the size of the sample to be processed by a given query as much
as possible. Therefore, one of the key problem in sampling-based AQP is Sample
Size Optimization (SSO),  i.e., to find the minimal sample size required to
answer an approximate query while ensuring that the approximation error
satisfies given constraints.

\header{Challenges:} When we consider solving the SSO problem, several
challenges arise due to its inherent uncertainty introduced by sampling. More
specifically, it is far from trivial to achieve all of the following goals.

\begin{list}{\labelitemi}{\leftmargin=1em}\itemsep 0pt \parskip 0pt
    \item \emph{Statistical accuracy,} which means that the resulting sample
    size is large enough to ensure that the approximate query result satisfies
    the user-defined error constraints. To ensure accuracy, error estimation
    methods are required to test whether the constraints hold. However, these
    methods often make strong assumptions on the query, which limit their range
    of application.

    \item \emph{Computational efficiency,} which means not only that the sample
    size should be as small as possible to maximize the speed of analytical
    queries, but also that SSO algorithms themselves should be efficient enough
    to avoid hampering the performance of the whole analytical process. To
    improve efficiency, the sample size should be as small as possible, which
    may increase the risk of being inaccurate.

    \item \emph{Broad applicability,} which means that the assumptions on both
    the data and the analytical functions should be weak enough to accommodate a
    wide variety of analytical tasks. To broaden the range of applications,
    error estimation methods based on weak assumptions are in demand. However,
    although adopting these methods might be beneficial in terms of
    applicability, it is often at the cost of efficiency since these methods are
    typically computationally intensive~\cite{DBLP:conf/sigmod/ZengGMZ14}.
\end{list}

\header{Limitations of existing methods:} From the discussions above, the three
goals of solving the SSO problem may contradict with each other. Thus, achieving
all of them at the same time seems impossible. Existing SSO methods often fall
short in one or more of the three aspects above. Specifically,

\begin{list}{\labelitemi}{\leftmargin=1em}\itemsep 0pt \parskip 0pt
    \item \emph{Statistical accuracy:} To estimate the approximation error, some
    methods~\cite{DBLP:conf/eurosys/AgarwalMPMMS13,
    DBLP:conf/sigmod/WangKFGKM14,DBLP:conf/sigmod/KrishnanWFGK16} assume that
    the sampling distribution is approximately normal such that the standard
    interval~\cite{diciccio1996} could be applied. However, such assumptions do
    not always hold, which brings the risk of producing inaccurate results.
    What's worse, this happens rather
    frequently~\cite{DBLP:conf/sigmod/AgarwalMKTJMMS14}.

    \item \emph{Computational efficiency:} Some
    methods~\cite{DBLP:journals/pvldb/KimBPIMR15, DBLP:conf/sigmod/DingHCC016,
    DBLP:conf/sigmod/Alabi016} rely on concentration
    inequalities~\cite{chung2006concentration} such as Hoeffding's
    inequality~\cite{Hoeffding1963} for error estimation. Such estimation is so
    conservative that it requires much larger sample size than necessary to meet
    the constraints~\cite{DBLP:conf/sigmod/AgarwalMKTJMMS14}, making it
    inefficient in terms of sample size. Some other
    methods~\cite{DBLP:conf/sigmod/ZengAS16} find optimal sample sizes in a
    mini-batch approach, which may result in a huge number of trials before the
    constraints are met and is inefficient in terms of the running time of the
    SSO method itself.

    \item \emph{Broad applicability:} Methods that exploit normality and
    concentration inequalities require assumptions on the analytical function to
    guarantee accuracy for a query. For example, the central limit theorem
    (CLT)~\cite{casella2002statistical} only works when the analytical function
    is \texttt{AVG}. This makes it hard to apply the SSO methods to arbitrarily
    complex analytical functions.
\end{list}

In summary, most existing methods overemphasize only one of the three goals
while paying little attention to the others by either making idealistic
assumptions on the statistical properties of the approximate analytical query or
almost completely ignoring them.

\header{Our contributions:} Being aware of the limitations, by using a novel
parametric model based on reasonably weak assumptions, we design a family of
different novel SSO methods under the same proposed framework but suitable for
various error metrics. The methods based on our framework not only have maximum
applicability being able to support almost all kinds of analytical queries, but
also achieve a decent balance between accuracy and efficiency by selecting
samples as small as possible while ensuring the error constraints to be
satisfied.
\begin{list}{\labelitemi}{\leftmargin=1em}\itemsep 0pt \parskip 0pt
    \item To maximize applicability, we propose a Model-guided Iterative Sample
    Selection (MISS) framework for generally addressing the SSO problem. It
    allow users to customize error metrics, sampling methods and error
    estimation methods for different scenarios and makes no assumption on the
    data and the query.

    \item To balance accuracy and efficiency, we propose a \emph{linear} error
    model describing the relationship between approximate errors and the sample
    sizes. Then, by combining the MISS framework and the error model, we propose
    an algorithm called \textsc{$L^2$Miss} that finds the optimal sample size
    for $L^2$ norm error. We show theoretically and empirically that
    \textsc{$L^2$Miss} is efficient while being accurate.

    \item To cope with different error metrics, we extend the \textsc{$L^2$Miss}
    algorithm under the MISS framework by converting the error constraints
    defined in terms of $L^2$ norm error to other metrics. Both theoretical and
    empirical study show that the extensions can handle a large variety of data
    and analytical functions efficiently, while providing satisfactory accuracy.
\end{list}

\header{Organization:} In \autoref{sec:pre}, we define the SSO problem formally
and develop the error model. In \autoref{sec:framework}, we propose MISS, a
general framework for solving the SSO problem. In \autoref{sec:l2miss} and
\autoref{sec:extensions}, we develop a series of algorithms under the MISS
framework to solve the SSO problem under various error metrics.  Experimental
evaluation is presented in \autoref{sec:experiments}, followed by related work
in \autoref{sec:related}. We state our conclusion and discuss future work in
\autoref{sec:conclusion}.

\section{Preliminaries}
\label{sec:pre}

\subsection{Problem Description}
\label{sec:problem}

In this paper, we mainly focus on analytical queries, which obtain summaries
from data. Following the convention in
BlinkDB~\cite{DBLP:conf/eurosys/AgarwalMPMMS13}, we define approximate
analytical queries as \autoref{lst:query}.
\begin{lstlisting}[caption={Approximate Analytical Queries}, label={lst:query}, language=SQL, mathescape=true, morekeywords={ERROR, WITHIN, CONFIDENCE}]
    SELECT $X, f(Y)$ FROM $D(X, Y)$
    GROUP BY $X$ WHERE $P$
    ERROR WITHIN $\epsilon$ CONFIDENCE $1 - \delta$
\end{lstlisting}

Here, $D$ denotes the dataset to perform the query on. $X$ and $Y$ are two sets
of attributes in $D$. Following the convention in
OLAP~\cite{DBLP:conf/sigmod/HellersteinHW97}, we call the attribute set $X$ the
group-by attributes, or \emph{dimensions}; and $Y$ the analytical attributes, or
\emph{measures}. $P$ denotes the selection predicate, which is intended for the
\texttt{COUNT}-with-predicate queries. $f$ is called the analytical function
computing the analytical results over $Y$ grouped by $X$.

We denote the true analytical result on $D$ and its approximation as
$\vec{\theta}$ and $\vec{\hat\theta}$ respectively. Note that for a $m$-group
query,  $\vec{\theta}$ and $\vec{\hat\theta}$ are both $m$-dimensional vectors,
each of which the entry corresponds to the query result of one group. To measure
the approximation error, we denote $d$ as the error metric, and the error can be
expressed as $\errsub{}$. To bound the error, we define error constraints to
have the form of
\begin{equation}
    \label{eq:errorConstraint}
    \prob{\errsub{} \leq \epsilon} \geq 1 - \delta,
\end{equation}
where $\epsilon$ is called the error bound, $\delta$ is called the error
probability, and $1 - \delta$ is called the confidence. From a statistical point
of view, \autoref{eq:errorConstraint} states that the \emph{margin of
error}~\cite{lohr2009sampling} of $\vec{\hat\theta}$ is no greater than
$\epsilon$ with confidence $1 - \delta$.

Note that throughout the paper, we assume that all the data are stored in a
single table. We refer readers to~\cite{DBLP:conf/sigmod/0001WYZ16} for the
approximate processing techniques of joins. We also would not consider handling
selections in a general circumstances other than \texttt{COUNT}-with-predicate
queries. For AQP of selections, we recommend readers to refer to
\cite{DBLP:conf/sigmod/DingHCC016} for more discussion.

In order to answer a query like in \autoref{lst:query} approximately, we draw a
sample $S \subseteq D$ from the dataset $D$, and apply the analytical function
$f$ on $S$ to obtain the approximate result. For an $m$-group query, we denote
the size of each group in $S$ and $D$ as $n_i$ and $|D|_i$, respectively, where
$i = 1,2,\ldots, m$. For convenience, we use $\vec{n}$ to denote the vector of
sample size, i.e., $\vec{n} = \irow{\cseqsub{n}{&}{m}}$. To find optimal
samples, we attempt to minimize the \emph{total sample size}, which is defined
as the sum of sample sizes of all groups, i.e.,
\begin{equation}
\label{eq:totalSize}
C\paren{\vec{n}} = \sum_{i = 1}^m n_i = \vec{1}\trans\vec{n}.
\end{equation}

Using the concept of error metric and total sample size, the SSO problem, which
aims to find the optimal sample size subject to the predefined error constraint,
is formalized as follows.

\begin{definition}
\label{def:ssoProblem}
    Given an approximate analytical query in \autoref{lst:query}, the sample
    size optimization problem is to
    \begin{equation}
    \label{eq:ssoProblem}
        \min_{\vec{n}} C(\vec{n}) \quad s.t. \quad \prob{\errsub{} \leq \epsilon} \geq 1 - \delta
    \end{equation}
    where $\vec{n}$ is the size of the sample $S$ drawn from $D$, $\epsilon$
    denotes the error bound and $\delta$ denotes the error probability.
\end{definition}

Solving the problem in \autoref{def:ssoProblem} is far from trivial. This is
because (i) the approximation error $\errsub{}$ is random, which makes the
problem stochastic rather than deterministic; and (ii) the closed-form
expression of $\errsub{}$ is unknown. The two reasons make it impossible to
apply the conventional optimization algorithms such as gradient-based ones to
find the optimum~\cite{DBLP:journals/corr/AmaranSSB17}. One effective way of
alleviating the two issues together is to approximate $\errsub{}$ with a
\emph{surrogate model}~\cite{box1987empirical}, which we call the \emph{error
model} to reduce the impact of noise and to guide the process of finding the
optimal sample size.

\subsection{Modeling Approximation Errors}
\label{sec:model}

Next, we would like to develop the error model, which takes a sample size
$\vec{n}$ as the input and outputs the predicted approximation error using
samples of size $\vec{n}$ to approximately answer the query.

\subsubsection{Statistical Properties of Analytical Queries}
To derive the model, we first examine the asymptotic properties shared by common
analytical queries, i.e., how approximation errors decrease as sample sizes
increase for typical queries. We then derive the error model based on such
properties.

In order to simplify the discussions, we assume that, queries only involve a
single group and denote the sample size, the true and the approximate analytical
result by $n$, $\theta$ and $\hat\theta$ respectively. Later, we will extend the
discussion to the multi-group case for completeness in this section.

\header{Relationship between error and sample size.} Intuitively, approximation
errors decrease as the corresponding sample sizes increase. To formalize the
intuition, We employ the concept of convergence in probability theory and
quantify how the error decreases using its \emph{rate of convergence in
probability}~\cite{van2000asymptotic}, which is defined as follows.
\begin{definition}
\label{def:errorConvergence}
Given the approximation error $\sngerr$ that converges in probability, the rate
of convergence in probability of $\sngerr$ is $O(r_n)$ if $r_n^{-1}\sngerr =
O_p(1)$, where $O_p(1)$ means ``bounded in probability''.
\end{definition}

Intuitively, the $O_p$ notation shares the same idea with the $O$ notation in
computational complexity in that for any two sequence $\left\lbrace a_n
\right\rbrace$ and  $\left\lbrace b_n \right\rbrace$, $a_n = O\left( b_n
\right)$ if and only if the ratio $a_n / b_n$ is bounded by a constant for a
sufficiently large $n$, while $a_n = O_p\left( b_n \right)$ means that the ratio
$a_n / b_n$ is bounded with arbitrarily large probability. Note that for
convenience, we omit the term ``in probability'' when talking about convergence
unless otherwise specified in the rest of this paper.

\header{Analytical results as statistics.} To find the the rate of convergence
of error for different types of queries using statistical theory, we claim that
for a sample $S = \set{\seqsub{X}{,}{n}}$, the results of a large variety of
analytical queries can be classified in to at least one of the following three
types of statistics:
\begin{list}{\labelitemi}{\leftmargin=1em}\itemsep 0pt \parskip 0pt
    \item \emph{U-statistics}~\cite{hoeffding1948}, which is of the form
    \begin{equation}
    \label{eq:uStatistics}
        U_n = \binom nk^{-1} \sum_\nu \kappa\left( X_{\nu_1}, X_{\nu_2}, \ldots, X_{\nu_k}\right),
    \end{equation}
    where $\set{\seqsub{\nu}{,}{k}}$ is a subset of $\set{1, 2, \ldots, n}$.
    Examples of U-statistics include common aggregate functions, e.g.,
    \texttt{AVG}, \texttt{VARIANCE}, and
    \texttt{PROPORTION}~\cite{van2000asymptotic}.

    \item \emph{M-estiamtors}~\cite{huber1964}, which is of the form
    \begin{equation}
    \label{eq:mEstimators}
        M_n = \argmax_\theta \sum_{i = 1}^n \psi_\theta \left(X_i\right),
    \end{equation}
    where $\theta$ is the parameter to be estimated, and $\psi_\theta$ are known
    functions. Examples of M-estimators include typical machine learning
    methods, such as linear regression, logistic regression and kernel
    regression~\cite{van2000asymptotic}.

    \item \emph{Inconsistent estimators}, of which the approximation errors do
    not converge to zero in probability, such that the error would not keep on
    decreasing with the increase of sample sizes. Examples of inconsistent
    estimators include some aggregate functions, such as \texttt{SUM} and
    \texttt{COUNT}~\cite{van2000asymptotic}.
\end{list}

\header{Rate of convergence of common statistics.} To derive the error model, we
examine the rate of convergence for common statistics, including the three
categories above and others.
\begin{list}{\labelitemi}{\leftmargin=1em}\itemsep 0pt \parskip 0pt
    \item The rate of convergence of U-statistics and M-estimators is given by
    the following lemma.
    \begin{lemma}
    \label{lem:convergence}
        Under weak conditions~\cite{van2000asymptotic}, the rate of convergence
        of U-statsitics $U_n$ and M-estimators $M_n$ are both $O\paren{n^{-b}}$,
        where $b$ is a positive constant.
    \end{lemma}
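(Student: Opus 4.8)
The plan is to reduce the statement to the classical asymptotic-distribution theory for each class of estimators and to read off the exponent $b$ from the scaling of the limit law. The bridge I would use throughout is an elementary consequence of Prohorov's theorem: if $n^{b}\paren{\hat\theta - \theta}$ converges in distribution to some tight random variable, then the sequence $\set{n^{b}\paren{\hat\theta - \theta}}$ is uniformly tight, which is exactly the statement $n^{b}\paren{\hat\theta - \theta} = O_p(1)$. In the single-group scalar setting the error is $\sngerr = \left\lvert \hat\theta - \theta \right\rvert$, so this is equivalent to $r_n^{-1}\sngerr = O_p(1)$ with $r_n = n^{-b}$; by \autoref{def:errorConvergence} the rate is then $O\paren{n^{-b}}$. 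Hence it suffices, for each case, to exhibit a positive exponent $b$ together with a nondegenerate limit law after scaling by $n^{b}$.

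First I would treat U-statistics via Hoeffding's central limit theorem. In the non-degenerate case, where the first-order term of the Hoeffding decomposition has strictly positive variance $\sigma_1^2$ and the kernel has a finite second moment, one has $\sqrt n\paren{U_n - \theta} \Rightarrow N\paren{0, k^2\sigma_1^2}$. The normal limit is tight, so by the bridge above the rate is $O\paren{n^{-1/2}}$, i.e. $b = 1/2$. For degenerate kernels, where the first $j-1$ projections vanish, the Hoeffding decomposition isolates a leading term of order $n^{-j/2}$, and scaling by $n^{j/2}$ again yields a tight (generalized chi-square) limit; the rate is then $O\paren{n^{-j/2}}$ with $j \ge 1$, so $b = j/2 > 0$ still holds.

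Next I would treat M-estimators using the standard asymptotic-normality theorem. Under the weak conditions cited — consistency of $M_n$ for the maximizer $\theta_0$, smoothness of $\theta \mapsto \psi_\theta$ near $\theta_0$ with a non-singular information matrix, and an integrability/Donsker condition permitting linearization of the estimating equation — one linearizes the first-order condition to obtain $\sqrt n\paren{M_n - \theta_0} \Rightarrow N\paren{0, \Sigma}$. Tightness of the normal limit again delivers the rate $O\paren{n^{-1/2}}$, so $b = 1/2$.

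The bulk of the argument is therefore bookkeeping: matching the hypotheses of the cited limit theorems to the stated weak conditions and checking that the scalar single-group reduction is faithful, after which the multi-group extension promised in the text follows coordinatewise (taking $b$ to be the smallest exponent over the groups). The one genuinely delicate point, and the main obstacle, is the degenerate U-statistic case, where I must argue that a first nondegenerate projection always exists, so that some finite scaling $n^{j/2}$ produces a nontrivial limit rather than an identically-zero one. This holds unless $U_n$ is almost surely constant — a trivial statistic that can be excluded or assigned any $b > 0$ vacuously. Collecting the cases shows that $b$ is a positive constant in every instance, which proves the lemma.
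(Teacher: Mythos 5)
Your proposal is correct and takes essentially the same route as the paper: the paper gives no proof of its own, stating the lemma with a bare citation to the classical asymptotic theory, and your argument---Hoeffding's CLT for (possibly degenerate) U-statistics, standard asymptotic normality for M-estimators, and the tightness bridge converting weak convergence of $n^{b}\left(\hat\theta - \theta\right)$ into $n^{b}\sngerr = O_p(1)$ as required by the paper's definition of rate of convergence---is precisely the content of that citation. Your treatment of the degenerate case via the first nonvanishing Hoeffding projection is a correct refinement that the paper leaves implicit.
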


    \item For inconsistent estimators, it may be impossible to find the optimal
    sample size directly since the approximation errors may not continue to
    decrease no matter how the sample sizes increase. Fortunately, for some
    queries in this category, there exists a transformation converting them to
    consistent estimators, such as U-statistics and M-estimators. For example,
    \begin{equation*}
        \texttt{SUM} \left( Y \right) = |D| \cdot \texttt{AVG} \left( Y \right), \quad \texttt{COUNT} \left( Y \right) = |D| \cdot \texttt{PROPORTION} \left( Y \right)
    \end{equation*}
    where $|D|$ denotes the size of $D$. In such case, we can express the error
    constraints with respect to the corresponding consistent estimators using
    the transformation, and then optimize the sample size subject to the
    transformed constraints. And the actual error can be obtained by applying
    the inverse transformation.

    \item For some other statistics such as \texttt{QUANTILE} that are not
    U-statistics or M-estimators, they may be transformed into the two
    categories. Under specific conditions, these statistics may also converge at
    rate $O\paren{n^{-b}}$. Therefore, we claim that the convergence rate of
    many types of statistics, besides U-statistics and M-estimators, also has
    the form of $O\left( n^{-b} \right)$.
\end{list}

\header{Observation.} To summarize, we observe that, even though the concrete
expressions of analytical queries may differ, many of them still share one
principal asymptotic property, which is stated as the following proposition.
\begin{proposition}
\label{prop:asym}
    For common analytical queries, the approximation error $\sngerr$ converges
    at rate $O\left( n^{-b} \right)$.
\end{proposition}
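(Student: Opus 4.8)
The plan is to prove the proposition by a case analysis over the three classes of statistics identified above, reducing every case to the rate guaranteed by \autoref{lem:convergence}. Since the proposition is a synthesis claim, the argument amounts to showing that each common query either falls directly into the U-statistic or M-estimator categories, or can be transformed into one of them while preserving the $O\left(n^{-b}\right)$ rate. Throughout, I would make one implicit assumption explicit: in the single-group setting $\theta$ and $\hat\theta$ are scalars and the error metric $d$ is comparable (up to a fixed constant) to $|\hat\theta - \theta|$, so that the convergence rate of the estimator transfers verbatim to the convergence rate of the error $\sngerr$.

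First I would dispatch the base case. If the analytical result $\hat\theta$ is a U-statistic $U_n$ or an M-estimator $M_n$, then \autoref{lem:convergence} gives directly that $\sngerr = O\left(n^{-b}\right)$ for some positive constant $b$, provided the weak regularity conditions hold. By the comparability assumption above, this immediately settles the proposition for the first two classes.

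Next I would handle the inconsistent estimators. For queries such as \texttt{SUM} and \texttt{COUNT} the estimator does not converge, so the statement must be read through the transformation $\texttt{SUM}(Y) = |D|\cdot\texttt{AVG}(Y)$ and $\texttt{COUNT}(Y) = |D|\cdot\texttt{PROPORTION}(Y)$. Since $|D|$ is a fixed constant independent of $n$, multiplying by it scales the error by the same constant and therefore leaves the rate of convergence unchanged: in the notation of \autoref{def:errorConvergence}, $r_n^{-1}\left(|D|\cdot\sngerr\right) = |D|\cdot O_p(1) = O_p(1)$, so the constant is absorbed into the $O_p(1)$ bound. Thus, after expressing the error constraint in terms of the associated consistent estimator, the proposition applies to this class as well.

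Finally I would cover the residual statistics, e.g. \texttt{QUANTILE}, by exhibiting a transformation into a U-statistic or an M-estimator and then invoking the base case. The main obstacle lies precisely here: unlike the first two classes, these transformations are statistic-specific and valid only under additional regularity. For quantiles, for instance, one relies on a Bahadur-type linearization that requires the underlying density to be positive and smooth near $\theta$, yielding $\hat\theta - \theta = O_p\left(n^{-1/2}\right)$. Because no single set of weak conditions covers all such statistics simultaneously, a fully universal proof is out of reach; this is why the statement is phrased as a claim about \emph{common} analytical queries rather than as an unconditional theorem, and the proof is completed by verifying the transformation case-by-case for the queries of interest.
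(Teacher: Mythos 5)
Your proposal is correct and takes essentially the same route as the paper, which presents \myref{Proposition}{prop:asym} not as a formally proved theorem but as a summary of exactly this three-way case analysis: \myref{Lemma}{lem:convergence} for U-statistics and M-estimators, the transformations $\texttt{SUM}(Y) = |D| \cdot \texttt{AVG}(Y)$ and $\texttt{COUNT}(Y) = |D| \cdot \texttt{PROPORTION}(Y)$ for inconsistent estimators, and statistic-specific transformations valid only under additional conditions for cases such as \texttt{QUANTILE}. Your added explicitness --- the comparability assumption between the error metric $d$ and $\left|\hat\theta - \theta\right|$, the absorption of the constant $|D|$ into the $O_p(1)$ bound, and the honest acknowledgment that the residual class admits no single uniform set of weak conditions --- sharpens the paper's informal argument without changing its structure.
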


\subsubsection{The Error Model}
\label{sec:mem}

Based on the observation above, we derive the error model to approximate the
relationship between sample sizes and errors.

\header{Single-group error model.} We first consider the single-group case.
According to \myref{Proposition}{prop:asym}, $\sngerr = O_P\left( n^{-b}
\right)$. Observe that in real-world applications, the absolute size of samples,
i.e., $n$ is typically large. Therefore, it is reasonable to ignore the
lower-order terms in the $O_P$ notation and only use the leading term.
Therefore, the single-group error is
\begin{equation}
\label{eq:singleModel}
    \sngerr \approx a n^{-b}
\end{equation}
where $a$, $b$ are constants. The logarithm of right hand side is called the
single-group error model, which is linear to $n$.

\header{Multi-group error model.} To derive the multi-group error model, we
first consider a specific error metric $d_g$, which is defined as the geometric
mean of the errors of all groups. Specifically, by plugging
\autoref{eq:singleModel} into the equation above, we have
\begin{equation*}
    \errsub{g} = \sqrt[m]{\prod_{i = 1}^m { d\paren{\hat\theta_i, \theta_i}}} = = \sqrt[m]{\prod_{i = 1}^m { a_i n_i^{-b_i}}}
\end{equation*}
where $d\paren{\hat\theta_i, \theta_i}$ is the error of group $i$, and $a_i$,
$b_i$ are all constants. Logarithms are then token on both side gives
\begin{equation*}
    \log \errsub{g} = \frac{1}{m} \paren{\sum_{i = 1}^m \log a_i - {\sum_{i = 1}^m { b_i n_i}}} = \beta_0 - \sum_{i = 1}^m \beta_i \log n_i,
\end{equation*}
where $\beta_i$ are all constants.

The right hand side of the second equation above, which is denoted by
$H\paren{\vec{n}; \vec{\beta}}$, is called the multi-group error model, or
simply the \emph{error model}, where $\vec\beta = \irow{\beta_0 & \beta_1 &
\cdots & \beta_m}$ is the parameter vector. For convenience, we denote
$\tilde{\vec n} = \irow{1 & -\log n_1 & -\log n_2 & \cdots & -\log n_m}$. The
model is then rewritten as
\begin{equation*}
    H\paren{\vec n; \vec\beta} = \vec \beta\trans \tilde{\vec n},
\end{equation*}
which is linear. Moreover, we use the notation $H\paren{\vec{n};
\vec{\hat\beta}^{(k)}}$ to denote the model with parameter $\vec\beta =
\vec{\hat\beta}^{(k)}$ and $H\paren{\vec{n}^{(k)}; \vec{\hat\beta}^{(k)}}$ to
denote the value of the model with parameter $\vec\beta = \vec{\hat\beta}^{(k)}$
at sample size $\vec n =\vec{n}^{(k)}$. We use $\vec{\hat\beta}$ to denote the
estimated value of $\vec{\beta}$.

Even though the error model $H\paren{\vec n; \vec\beta}$ is defined assuming
that the error metric is the geometric mean, as we will show in
\autoref{sec:l2miss} and \autoref{sec:extensions}, it performs well for other
commonly-used error metrics.

\section{Framework}
\label{sec:framework}

In this section, we propose the MISS framework to solve the SSO problem with as
few assumptions as possible. Specifically, we make no assumption on the data,
the query, the error metric, the sampling and the error estimation methods. By
doing so, we provide users a general enough approach to solve the problem.

Since the SSO problem involves black-box functions as constraints, finding an
exact closed-form solution is hard in general. Therefore, existing AQP systems
try to find an approximate solution using one of the following two approaches,
as discussed in \autoref{sec:introduction}.

\begin{list}{\labelitemi}{\leftmargin=1em}\itemsep 0pt \parskip 0pt
    \item \emph{The formula-based approach} ideally assumes that a closed-form
    approximation equation in terms of the sample size and the error is known.
    The predicted sample size can be directly obtained by solving the equation
    with respect to the desired error
    constraint~\cite{DBLP:conf/sigmod/AgarwalMKTJMMS14}.

    \item \emph{The model-free approach} exploits no priori knowledge of the
    relationship between sample size and approximation error, and simply guess
    the optimum by increasing the sample size by a small step in each iteration
    until the error constraint is satisfied~\cite{DBLP:conf/sigmod/ZengAS16}.
    The process of generating the guess in each iteration is call \emph{sample
    size searching}.
\end{list}

The formula-based approach is usually efficient enough by simply solving a
closed-form equation while the model-free approach is able to provide sufficient
accuracy guarantees and broadly applicable for a variety of queries by making
only a few assumptions. However, both approaches fail to balance accuracy,
efficiency and applicability. The former might suffer from inaccuracy or limited
applicability due to its idealistic assumptions, while the latter might suffer
from inefficiency due to the ignorance of the statistical property.

To make the best of their advantages while overcome their disadvantages, we
propose the Model-guided Iterative Sample Selection (MISS) framework to solve
the SSO problem generally. MISS iteratively estimates the error with respect to
the predicted sample size, polishes the model according to the of sample sizes
and errors observed, and makes the prediction again. 

This approach has two advantages. On one hand, by using a flexible model rather
than a fixed equation and iterative prediction, the risk of obtaining inaccurate
sample size can be greatly reduced compared to the formula-based approach. On
the other hand, this framework takes the advantage of the error model introduced
in \autoref{sec:model} to predict the optimal sample size such that the
unnecessary searching can be largely avoid compared to the model-free approach.

The basic idea of our MISS framework is to start with an initial guess of the
sample size, then to iteratively search for the optimum until the error
constraint is satisfied.

\begin{algorithm}
    \KwIn{Dataset $D$, analytical function $f$, error bound $\epsilon$ with
    error probability $\delta$, and error metric $d$.} \KwOut{Sample $S
    \subsetneqq D$ such that the approximation error $\mlterr \leq \epsilon$
    with probability $1 - \delta$.} \DontPrintSemicolon{} $P \leftarrow
    \emptyset$\; $\vec{n} \leftarrow
    \func{Initialize}{}$\;\label{ln:initialization} \While{True}{$S \leftarrow
    \func{Sample}{D, \vec{n}}$\;\label{ln:sampling} $e \leftarrow
    \func{Estimate}{S, d, f, \delta}$\;\label{ln:estimation} \lIf{$e \leq
    \epsilon$\label{ln:test}}{\Return{$S$}} \uElse{$P \leftarrow P \cup
    \set{\left( \vec{n}, e \right)}$\;\label{ln:errorProfile} $\vec{n}
    \leftarrow \func{Predict}{P, \epsilon}$\label{ln:prediction}}}
    \caption{The MISS Framework}
\label{alg:framework}
\end{algorithm}

The details of the MISS framework in shown in \autoref{alg:framework}. The MISS
framework first generates an initial guess of the sample size in
\myref{Line}{ln:initialization}. Then in each iteration, the framework draws a
sample of the generated size in \myref{Line}{ln:sampling}. Next, it estimates
the approximation error in terms of the sample in \myref{Line}{ln:estimation}.
Afterwards, it tests whether the error constraint is satisfied in
\myref{Line}{ln:test}. If the constraint is satisfied, it returns the selected
sample of optimal size successfully. Otherwise, in
\myref{Line}{ln:errorProfile}, the estimated error and the sample size is
collected as the \emph{error profile} defined as
\begin{equation}
    P = \set{\paren{\vec{n}^{(j)}, e^{(j)}} \middle| 1 \leq j\leq k}
\end{equation}
for prediction, where $l$ is the number of iterations. Within error profile,
$\vec{n}^{(j)}$ denotes the size of the sample in iteration $j$, and $e^{(j)}$
denotes the approximation error estimated in iteration $j$. Finally, it
generates the predicted optimal sample size using the error model in
\myref{Line}{ln:prediction}. With testing mechanism, this framework ensures to
terminate with the constraints satisfied.

The MISS framework is general enough to allow users to design the subroutines to
meet their needs, i.e., the \textsc{Initialize}, \textsc{Sample},
\textsc{Estimate}, and \textsc{Predict} subroutines. The interfaces and
functionalities of these subroutines are defined as follows.
\begin{list}{\labelitemi}{\leftmargin=1em}\itemsep 0pt \parskip 0pt
    \item \textsc{Initialize}: it generates an initial guess of the sample size.
    For a better prediction, the initial guess typically should better be a
    sequence of sample sizes rather than a single value. The goal of determining
    initial guesses is to make predictions more accurate and reliable without
    compromising much of efficiency. 

    \item \textsc{Sample}: it takes a sample $S$ randomly from the original
    dataset $D$ of size $\vec{n}$.

    \item \textsc{Estimate}: it estimates the error $e$ of the approximate
    result with error probability $\delta$ in terms of the analytical function
    $f$ evaluated on the sample $S$ measured by the error metric $d$. Typically,
    relaxing the assumptions required by error estimation methods helps to
    enlarge the range of applications of the SSO algorithm.

    \item \textsc{Predict}: it predicts the optimal sample size $\vec{n}$ using
    the error model. It uses the error profile $P$ to fit the model, then apply
    the model to find the optimum in terms of the total sample size $C$ subject
    to the error constraint. Specific models only work for specific categories
    of data, analytical functions and error metrics. When the model fails, the
    \textsc{Predict} subroutine recognizes the failure and returns an error.
\end{list}

According to the statistical properties and the performance requirements of the
query to be processed, data analysts are able to develop specific SSO algorithms
fit for their needs based on the MISS framework. As we will show in
\autoref{sec:l2miss}, for a specific scenario, by implementing the subroutines,
a concrete SSO algorithm can be derived specifically. Moreover, as shown in
\autoref{sec:extensions}, by extending the algorithm, we proposed a family of
algorithms that work under various error metrics.

\section{Finding Optimal Sample Sizes}
\label{sec:l2miss}

In this section, we propose a concrete SSO algorithm for the $L^2$ norm error
metric $d_{L^2}$, called \textsc{$L^2$Miss}, based on the MISS framework
proposed in \autoref{sec:framework} and utilizing the error model described in
\autoref{sec:model}. The $L^2$ norm error $\mlterr$ is defined as the $L^2$ norm
of approximation errors of all groups. Specifically,

\begin{equation}
    \mlterr = \sqrt{\sum_{i = 1}^m {\left( {\hat\theta}_i - {\theta}_i \right)}^2}
\end{equation}

Under the MISS framework, the \textsc{$L^2$Miss} algorithm is developed by
implementing the subroutines declared in \autoref{sec:framework}. To illustrate
the \textsc{$L^2$Miss} algorithm, we first introduce the sampling and the error
estimation methods in order. Then, we show how to predict the optimal sample
size in a model-guided approach. Finally, we present how to initialize the error
profile to make better predictions without compromising much of efficiency.

\subsection{Sampling}
\label{sec:sampling}

For sampling, To reduce the total sample size especially when the data size
differs substantially for each group~\cite{DBLP:journals/debu/MozafariN15}, we
use uniform stratified sampling, which takes a sample of size $n_i$ from each
group in $D$ uniformly at random. By controlling the sample size of each group
individually, the total size of samples can be largely reduced.

To obtain a stratified sample, almost all existing data management platforms,
such as Apache Spark~\cite{DBLP:conf/nsdi/ZahariaCDDMMFSS12}, require a full
scan. This is for two reasons. On one hand, the random sampling method offered
by them is typically Bernoulli sampling~\cite{DBLP:conf/sigmod/GryzGLZ04}, which
assigns a probability to each record to determine whether it should be included
in the sample. On the other hand, the \texttt{GROUP BY} clause requires to
examine the group-by attributes for each record to determine which group it
belongs to. However, full scans can be rather expensive when the data size is
large.

Fortunately, we are able to avoid full scans in sampling by adopting two
techniques: first, we use gap sampling~\cite{erlandson2014faster}, which assigns
a probability only to the records in the sample instead of to all records; then,
we use inverted index~\cite{DBLP:journals/pvldb/WangLHCPS17} on group-by
attributes to avoid full scans to examine group membership. Specifically, we
perform sampling on each inverted list for each combination of values of the
group-by attributes and obtain the stratified sample using the index.

\subsection{Error Estimation}
\label{sec:estimation}

For maximum applicability, we choose the bootstrap~\cite{diciccio1996} for
estimation. It does not make any assumption on the specific distribution of the
data $D$ or the specific analytical function $f$, while only requires some weak
regularity conditions to work as expected.

The bootstrap operates as follows~\cite{wasserman2006all}: to estimate the error
of a statistic $T$ computed by the analytical function $f$, it draws $B$
\emph{resamples} $S^*_b$, $b = 1, 2, \ldots, m$, from the original sample $S$
with replacement, each of the same size as $S$. For each resample, compute the
estimate $T^*_b$ on $S^*_b$ in the same way as computing $T$ on $S$. Then it
approximates the true sampling distribution of $T$, i.e., $F\paren{t} =
\prob{T\leq t}$ by empirical distribution $\mathbb{F}\paren{t} = \paren{1/
B}\sum_{b = 1}^B \irv{T^*_b \leq t}$, where $\operatorname{I}$ denotes the
indicator random variable. The $1 - \delta$ confidence region can be obtained by
taking the $1 - \delta$ \emph{quantile} of $\mathbb{F}\paren{t}$.

The correctness of the bootstrap depends on the consistency of approximating the
true distribution $F\paren{t}$ with the empirical distribution
$\mathbb{F}\paren{t}$. The conditions required by the bootstrap is given by the
following lemma.
\begin{lemma}
\label{lem:bootstrap}
    Let $\theta$ be the parameter to be estimated. Under weak
    assumptions~\cite{wasserman2006all}, the bootstrap estimates the
    approximation error with respect to $\theta$ consistently.
\end{lemma}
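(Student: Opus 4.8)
\header{Proof sketch.} The plan is to reduce the claimed consistency to two classical ingredients from empirical-process theory: weak convergence of the empirical measure of the sample, and a smoothness (differentiability) condition on the map that produces the statistic. Throughout, I read ``the bootstrap estimates the error consistently'' as the requirement that the bootstrap distribution of the statistic $T$ converge to its true sampling distribution $F(t) = \prob{T \leq t}$, so that the $1-\delta$ quantile read off from $\mathbb{F}(t)$ yields an asymptotically valid bound on $\sngerr$.

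First I would write the statistic as a functional of the data-generating distribution. Let $P$ be the true distribution of the records and $\hat P_n$ the empirical measure of the sample $S$, and suppose the analytical function computes $T = \phi(\hat P_n)$ with population target $\theta = \phi(P)$. Donsker's theorem then gives that $\sqrt{n}\,(\hat P_n - P)$ converges weakly to a Gaussian (Brownian-bridge) limit $\mathbb{G}$. Assuming $\phi$ is Hadamard differentiable at $P$ with derivative $\phi'_P$, the functional delta method transfers this to $\sqrt{n}\,(T - \theta)$, which converges weakly to $\phi'_P(\mathbb{G})$; this is the limit law that governs the true error $\sngerr$.

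Next I would treat the bootstrap side. Conditionally on $S$, resampling with replacement produces an empirical measure $\hat P_n^{*}$, and the conditional (bootstrap) version of Donsker's theorem states that $\sqrt{n}\,(\hat P_n^{*} - \hat P_n)$ converges weakly to the same limit $\mathbb{G}$, in probability. Applying the delta method for the bootstrap to the same differentiable $\phi$ yields that $\sqrt{n}\,(\phi(\hat P_n^{*}) - \phi(\hat P_n))$ converges weakly to $\phi'_P(\mathbb{G})$ as well. Since the two limit laws coincide, the bootstrap distribution is consistent for the sampling distribution of the statistic, hence its quantiles consistently estimate the quantiles of $\sngerr$, which is exactly the asserted consistency of the error estimate.

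Finally I would close the gap between the idealized bootstrap distribution just described and the Monte-Carlo object $\mathbb{F}(t) = \paren{1/B}\sum_{b=1}^B \irv{T^*_b \leq t}$ that is actually computed: conditionally on $S$, a Glivenko--Cantelli argument shows $\mathbb{F}(t)$ converges uniformly to the exact bootstrap distribution as $B \to \infty$, so the finite-$B$ error is negligible. The main obstacle will be verifying Hadamard differentiability of $\phi$ broadly enough to cover the intended query classes; this smoothness is precisely the ``weak assumption'' the lemma leans on, and it is also where the bootstrap is known to fail (non-smooth functionals for which $\phi'_P$ does not exist). The care therefore lies in choosing regularity conditions weak enough to admit the U-statistics and M-estimators of \myref{Lemma}{lem:convergence}, yet strong enough for the delta method to apply on both the empirical and the bootstrap sides.
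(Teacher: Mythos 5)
The paper offers no proof of this lemma at all: it is stated as an imported result, with the ``weak assumptions'' deferred wholesale to the cited reference~\cite{wasserman2006all}, and the surrounding text only uses the conclusion (consistency of the bootstrap) together with a list of known failure cases. Your sketch therefore cannot match or diverge from an in-paper argument; what it does is reconstruct the standard proof behind the citation --- weak convergence of the empirical process (Donsker), Hadamard differentiability of the functional $\phi$ with the functional delta method applied on both the sample and the resample sides, and a conditional Glivenko--Cantelli step to dispose of the finite-$B$ Monte-Carlo error --- and this is precisely how the cited literature establishes bootstrap consistency, so your route is faithful to what the lemma leans on. Two refinements if you flesh it out. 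First, passing from consistency of the bootstrap \emph{distribution} to validity of the $1-\delta$ quantile bound on $\sngerr$ needs the limit law $\phi'_P(\mathbb{G})$ to be continuous (and strictly increasing) at that quantile, plus a continuous-mapping step to carry the convergence from the statistic itself to the error $d\paren{\hat\theta, \theta}$; you assert quantile consistency directly, which silently uses this. Second, you attribute the known failures solely to non-smoothness of $\phi$, which does account for the paper's \texttt{MIN}/\texttt{MAX} case, but the paper's other failure mode --- heavy-tailed data such as Pareto with small shape parameter --- is a breakdown of the moment/Donsker hypothesis in your \emph{first} ingredient, not of differentiability; since your argument already separates the two ingredients, it is worth stating that either one can fail and that the two failure cases listed in \autoref{sec:estimation} correspond to them respectively.
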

We call the difference between the estimated and the true approximation error
the \emph{estimation error}. The estimation error is introduced by the bootstrap
method and converges to $0$ in probability when the bootstrap is consistent. In
such cases, we are ensured that the bootstrap is performed correctly.

Even though the assumptions required by the bootstrap is much weaker compared to
other error estimation methods such as the normality-based ones, it is still
possible that the bootstrap fails. When this happens, we would have to seeking
alternative solution to provide approximate query results. Two typical cases
that the bootstrap is known to fail are as follows.
\begin{list}{\labelitemi}{\leftmargin=1em}\itemsep 0pt \parskip 0pt
\item \emph{Estimating \texttt{MIN} and
\texttt{MAX}}~\cite{DBLP:conf/sigmod/AgarwalMKTJMMS14}: In such cases, we
suggest to approximate \texttt{MIN} and \texttt{MAX} with the $\alpha$ and $1 -
\alpha$ quantiles respectively, of which the error can be correctly estimated by
the bootstrap, where $\alpha$ is a relatively small fraction.

\item \emph{Estimating heavy-tailed data}~\cite{wasserman2006all}: In such
cases, e.g., estimating the \texttt{AVG} of the
Pareto-distributed~\cite{casella2002statistical} data, we might have to turn to
concentration inequalities to estimate the error. 
\end{list}

As will be discussed in \autoref{sec:diagnostic}, we are able to develop
diagnostic methods to discover the cases that the bootstrap fails to be
consistent by checking the parameter of the model, and therefore can avoid
returning unreliable query results to users.

\subsection{Prediction}

\subsubsection{Applying the Model}
\label{sec:apply}

We now consider applying the error model $\mltmod$ in \autoref{sec:model} when
the error metric is $d_{L^2}$. Since the model is derived under error metric
$d_g$, we need to approximate $\errsub{L^2}$ with $\errsub{g}$. The following
theorem shows that such approximation is reasonable.

\begin{theorem}
\label{thm:modelError}
    $\forall \epsilon > 0$, if $\errsub{L^2} \leq \epsilon$, $\left|
    \errsub{L^2} - \errsub{g} \right| \leq \epsilon$.
\end{theorem}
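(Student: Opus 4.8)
The plan is to reduce the inequality to two elementary facts about the per-group errors and then apply a sandwich argument. Write $e_i = d\paren{\hat\theta_i, \theta_i} = \left| \hat\theta_i - \theta_i \right| \geq 0$ for the error of group $i$, so that both metrics are assembled from the same non-negative quantities: $\errsub{L^2} = \sqrt{\sum_{i=1}^m e_i^2}$ and $\errsub{g} = \left( \prod_{i=1}^m e_i \right)^{1/m}$. The first fact I need is simply that $\errsub{g} \geq 0$, which is immediate since it is a geometric mean of non-negative numbers.

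The second and only substantive step is to show $\errsub{g} \leq \errsub{L^2}$, which I would establish through the chain $\errsub{g} \leq \max_{i} e_i \leq \errsub{L^2}$. The first inequality is the standard bound that a geometric mean never exceeds the largest of its arguments, since $\prod_{i=1}^m e_i \leq \left( \max_i e_i \right)^m$ and taking $m$-th roots preserves order. The second inequality holds because $\left( \max_i e_i \right)^2 \leq \sum_{i=1}^m e_i^2$, so the largest coordinate is dominated by the $L^2$ norm of the whole error vector. One could instead route through AM--GM followed by Cauchy--Schwarz to obtain the sharper $\errsub{g} \leq m^{-1/2}\errsub{L^2}$, but the coarser bound above is cleaner and already suffices.

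Combining the two facts gives $0 \leq \errsub{g} \leq \errsub{L^2}$, hence $\errsub{L^2} - \errsub{g}$ lies in $\left[ 0, \errsub{L^2} \right]$; in particular the absolute value collapses to $\left| \errsub{L^2} - \errsub{g} \right| = \errsub{L^2} - \errsub{g} \leq \errsub{L^2}$. Invoking the hypothesis $\errsub{L^2} \leq \epsilon$ then yields $\left| \errsub{L^2} - \errsub{g} \right| \leq \epsilon$, as claimed. I do not expect a genuine obstacle, as this is an elementary inequality; the only point deserving care is the direction of the comparison — it is essential to prove $\errsub{g} \leq \errsub{L^2}$ rather than the reverse, since that is precisely what lets the single hypothesis on $\errsub{L^2}$ control both the magnitude of $\errsub{L^2}$ and, via non-negativity of $\errsub{g}$, the size of the gap between the two metrics.
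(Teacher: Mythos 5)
Your proof is correct and follows essentially the same route as the paper's: both establish the chain $0 \leq \errsub{g} \leq \max_{1 \leq i \leq m} \left| \hat\theta_i - \theta_i \right| \leq \errsub{L^2}$ and then conclude $\left| \errsub{L^2} - \errsub{g} \right| \leq \errsub{L^2} \leq \epsilon$. You merely spell out the two intermediate comparisons (geometric mean versus maximum, maximum versus $L^2$ norm) that the paper asserts in a single line.
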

\begin{proof}
   Since $0 \leq \errsub{g} \leq \max_{1 \leq i \leq m} \left| {\hat\theta}_i -
   {\theta}_i \right| \leq \errsub{L^2}$, we have $\left| \errsub{L^2} -
   \errsub{g} \right| \leq \errsub{L^2} \leq \epsilon$.
\end{proof}
We call $\left| \errsub{L^2} - \errsub{g} \right|$ the \emph{model error}. The
theorem implies that, when the approximation error $\errsub{L^2}$ is
sufficiently small, so is the model error. In real-world applications, users
typically would like to ensure that the approximation error is small enough. In
such cases, it is reasonable to apply the model.

\subsubsection{Model Fitting}

With the model $\mltmod$ at hand, we would like to figure out the best value of
the parameter $\mltmod$. To estimate the model parameter $\vec{\beta}$, we
attempt to minimize the MSE between the error profile observed and the model.
Therefore, in the $k$-th iteration of the algorithm, the problem of fitting
$\mltmod$ is formalized as
\begin{equation}
    \vec{\hat\beta} = \argmin_{\vec{\beta}} \norm{\tilde{N}\vec{\beta} - E}_2
\label{eq:olsFitting}
\end{equation}
where $E = \irow{\cseqsup{e}{&}{k}}$ and $\tilde{N} =
\irow{\cseqsup{\tilde{\vec{n}}}{&}{k}}$. By the normal
equation~\cite{nocedal2006numerical}, the solution of \autoref{eq:olsFitting},
denoted by $\vec{\hat\beta}_o$, is given by
\begin{equation}
\label{eq:olsSolution}
    \vec{\hat\beta}_o = \paren{\tilde{N}\trans\tilde{N}}^{-1}\tilde{N}\trans E.
\end{equation}

However, as mentioned in \autoref{sec:apply}, we actually approximate
$\errsub{L^2}$ with $\errsub{g}$. Therefore, to obtain a better model, we need
to calibrate our model to minimize the model error. According to
\autoref{thm:modelError}, we assign larger weights to error records with larger
sample sizes in the error profile $P$ since the corresponding model error is
smaller such that the overall model error is also smaller.

Therefore, to obtain a better the model, we penalize each record in the error
profile by multiplying each residual $\xi^{(k)}$ by a weight $w_k = \sum_{i =
1}^m n_i^{(k)}$, which is proportional to the sample size in regression. The
resulting problem is called weighted least square (WLS)
regression~\cite{wasserman2004all}. For that, we introduce a diagonal matrix
$W_{k\times k}$ with diagonal entries $\seqsub{w}{,}{k}$ such that the solution
of the weighted version of the fitting problem is given by
\begin{equation}
\label{eq:wlsSolution}
    \vec{\hat\beta}_w = \paren{\tilde{N}\trans W\tilde{N}}^{-1}\tilde{N}\trans WE.
\end{equation}
We use $\vec{\hat\beta}_w$ as the estimator of $\vec\beta$ in our algorithm and
denote it as $\vec{\hat\beta}$ for convenience in the remaining sections.

\subsubsection{Making Predictions}

After the model $\mltmod$ is fitted, we are able to apply the model to predict
the optimal sample size to avoid unnecessary searching overheads. Our goal is to
minimize the total sample size for a given sample size $\vec{n}$ subject to the
error constraint. In order to predict the optimal sample size, we solve the
approximate version of the SSO problem in \autoref{def:ssoProblem} by using the
model $\mltmod$ to approximate the error $\mlterr$, which is formalized as
\begin{equation}
\label{eq:prediction}
    \min_{\vec{n}} C\paren{\vec{n}} \quad \mathrm{s.t.} \quad \mltmod \leq \log \epsilon
\end{equation}
where $\epsilon$ is the user-defined error bound, the total sample size
$C\paren{\vec{n}} = \vec{1}\trans \vec{n}$ and the error model $\mltmod =
\vec{\beta}\trans \vec{\tilde{n}}$.

The approximate problem in \autoref{eq:prediction} is a constrained nonlinear
optimization problem, which has no closed-form solutions in general. However, by
using the method of Lagrange multipliers~\cite{nocedal2006numerical}, a
closed-form solution can be derived for this specific problem such that the
overhead of successive approximation required by conventional optimization
algorithms can be avoided.

To apply the method of Lagrange multipliers, we first define the Lagrange
function of the approximate problem as
\begin{equation*}
    L\paren{\vec{n}; \lambda}  = \vec{1}\trans \vec{n} - \lambda\paren{\vec{\beta}\trans \vec{\tilde{n}} - \log \epsilon}
\end{equation*}
such that the minimum of $L\paren{\vec{n}; \lambda}$ is exactly the solution of
the approximate problem in \autoref{eq:prediction}. Using calculus, finding the
minimum of $L\paren{\vec{n}; \lambda}$ is equivalent to solving the following
equations.
\begin{equation*}
    \frac{\partial L}{\partial n_i} = 1 - \lambda \beta_i\frac{1}{n_i} = 0,\quad
    \frac{\partial L}{\partial \lambda} = \vec{\beta}\trans \vec{\tilde{n}} - \log \epsilon = 0
\end{equation*}
where $i = 1, 2, \ldots, m$. Solving the equations above gives the predicted
optimal sample size $\vec{\hat n} = \irow{\cseqsub{\hat n}{&}{m}}$, where
\begin{equation}
\label{eq:predictSize}
    \hat{n_i} = \beta_i \exp \left( \frac{\beta_0 - \sum_{i = 1}^{m} \beta_i \log\beta_i - \log \epsilon}{\sum_{i = 1}^{m}\beta_i} \right).
\end{equation}
In real world, since we do not know $\vec{\beta}$ exactly, we use its estimator
$\vec{\hat \beta}$ in its place. Also, since sizes are all integers, we take the
nearest integer of $\hat{n_i}$ as our next guess of the sample size in the next
iteration of the algorithm.

\subsubsection{Failure Diagnostic}
\label{sec:diagnostic}

Even though the \textsc{$L^2$Miss} algorithm has a considerably wide range of
applications, it is still possible that the algorithm fails to estimate the
optimal sample size properly, which is mainly due to the following reasons:
\begin{enumerate}
    \item The predicted sample size is too large such that the algorithm may run
    out of resources (e.g., time and space).

    \item The predicted sample size does not keep increasing such that the error
    constraint can never be satisfied.
\end{enumerate}

Fortunately, using the error model, we are able to detect and recover from the
cases of failure beforehand. For that, we introduce a threshold $\tau > 0$ and
claim failures occur when $\sum_{i = 1}^m \hat\beta_i \leq \tau$ since we do not
know $\beta_i$ exactly. Such failures indicate that each $\beta_i$ is close to
$0$. Therefore, no matter how the sample size is increased, the approximation
error would almost not decrease. There are several reasons for such failures,
such as the inconsistency of the estimator or the estimated approximation error.
We call such failures to be \emph{unrecoverable}. In such cases, the algorithm
should raise an exception and exit.

Another type of failures is \emph{recoverable}. Recoverable failures are
typically caused by skewness, which means that for some groups, the increase of
the sample size has little or negative effect on error reduction. Recoverable
failures are indicated by the existence of some $\hat \beta_i < 0$. Such
failures would make the prediction subroutine fail to work properly since
\autoref{eq:predictSize} requires that each $\hat \beta_i > 0$ to take
logarithms.

Fortunately, we are able to recover from such failures by making all $\hat
\beta_i$ equal to eliminate negative values. Note that such adjustment would not
compromise accuracy. This is because even though the predicted sample size of
groups with negative or almost zero $\hat \beta_i$ is increased by increasing
$\hat \beta_i$ in \autoref{eq:predictSize}, it still has little effect in
reducing the error. As a result, the sample size of other groups would almost
not change in order to satisfy the error constraint. Therefore, the overall
sample size, i.e., the total sample size, would only increase, and the error
would only decrease as a consequence.

\begin{algorithm}
    \SetKw{Failure}{failure} \KwIn{Estimated parameter $\vec{\hat\beta}$ and
    threshold $\tau$.} \KwOut{The calibrated estimate $\vec{\hat\beta}$.}
    \DontPrintSemicolon{} \lIf{$\sum_{i = 1}^m \hat\beta_i \leq
    \tau$}{\Return{\Failure{}}}\uElseIf{$\min_{1 \leq i \leq m} \hat\beta_i \leq
    0$}{\For{$i \leftarrow 1$ \KwTo{} $m$}{$\hat\beta_i \leftarrow \sum_{i =
    1}^m \hat\beta_i / m$}} \Return{$\vec{\hat \beta}$}
    \caption{\textsc{Diagnostic}}
\label{alg:diagnostic}
\end{algorithm}

The diagnostic algorithm is formalized as \autoref{alg:diagnostic}. It first
determine whether an unrecoverable failure happens. If it does, the algorithm
exits abnormally with a failure to notify the user. Otherwise, it detects
whether the failure is recoverable. If so, the algorithm try to recover from the
failure by calibrating the estimated parameter. Otherwise, there is no failure,
then it returns the estimated parameter as it is.

\subsection{Initialization}

The initialization process is to generate a sequence of initial sample sizes $N$
whose length is $l$ such that the error profile can be initialized for model
fitting. To obtain a better model, we want the MSE of the estimator
$\vec{\hat\beta}$, i.e., $\expect \norm{\vec{\hat\beta} - \vec{\beta}}^2$ is
minimized. Since $\vec{\hat\beta}$ is unbiased, the MSE of $\vec{\hat\beta}$ is
proportional to the trace of the variance-covariance matrix of
$\vec{\hat\beta}$, which is denoted by
$\tr{\cov{\vec{\hat\beta}}}$~\cite{2007arXiv0707.0805C}. For simplification, we
assume here that all groups are mutually independent and all weights in $W$ are
equal such that for each group $i$ and sufficiently large $N_i$, minimizing
$\tr{\cov{\vec{\hat\beta}}}$ can be approximated by minimizing all $\pi_i =
\paren{\mean N_i}^2 / \mathbb{D}\, N_i$ where $N_i =
\irow{\cseqsup{n_i}{&}{l}}$, $\mathbb{E}\, N_i = (1 / l)\sum_{j = 1}^l n_i^{j}$,
and  $\mathbb{D}\, N_i = (1 / l)\sum_{j = 1}^l \paren{n_i^{j} - \mathbb{E}\,
N_i}^2$~\cite{wasserman2004all}.

To minimize $\pi_i$, we employ the Bhatia-Davis
inequality~\cite{bhatia2000better} that
\begin{equation*}
    \mathbb{D}\, N_i \leq \paren{\max N_i - \mean N_i} \paren{\mean N_i - \min N_i}.
\end{equation*}
Equality holds when all the $n_i^{(j)}$ are equal to either $\min N_i$ or $\max
N_i$ where $\max N_i$ and $\min N_i$ denote the maximum and the minimum of
$n_i^{(j)}$ in the vector $N_i$ respectively. Therefore, plugging the maximum of
$\mathbb{D}\, N_i$ into $\pi_i$ gives
\begin{equation*}
    \pi_i \geq \frac{\paren{\mean N_i}^2}{\paren{\max N_i - \mean N_i} \paren{\mean N_i - \min N_i}}.
\end{equation*}
Using calculus, the right hand side of the inequality above is minimized when
\begin{equation}
\label{eq:meanMinimum}
    \mean N_i = \frac{2}{\frac{1}{\min N_i} + \frac{1}{\max N_i}}
\end{equation}
To determine $\seqsup{n_i}{,}{l}$, suppose that $L_{\min}$ of all the
$n_i^{(j)}$ are equal to $\min N_i$ and the remaining $l_{\max} = l - l_{\min}$
of them are equal to $\max N_i$, then by the definition of $\mean N_i$,
\begin{equation}
\label{eq:meanDefinition}
    \mean N_i = \frac{l_{\min} \min N_i + l_{\max}\max N_i}{l}
\end{equation}
Combining \autoref{eq:meanMinimum} and \autoref{eq:meanDefinition} gives
\begin{equation}
\label{eq:sizeProportion}
    \frac{l_{\max}}{l_{\min}} = \frac{\min N_i}{\max N_i}
\end{equation}
From better performance, we cannot let $n_i^{(j)}$ become too large.
Therefore, we limit all the $n_i^{(j)}$ to be within the same interval $I_n
= \bracket{n_{\min}, n_{\max}}$. We call $I_n$ the \emph{initialization
interval}. By our argument above, $n_i^{(j)}$ should be equal to either
$n_{\min}$ or $n_{\max}$. According to \autoref{eq:sizeProportion}, we
sample each size $n_i^{(j)}$ from a distribution with probability
distribution $\Phi_n$ such that
\begin{equation}
\label{eq:initProb}
    \Phi_n\paren{n_{\min}} = \frac{n_{\max}}{n_{\min} + n_{\max}},\quad \Phi_n\paren{n_{\max}} = \frac{n_{\min}}{n_{\min} + n_{\max}}
\end{equation}
To obtain sample size of all groups, we repeat the process for $m$ times where
$m$ is the number of groups.

To determine $l$ and $I_n$, we suggest heuristically that $l$ should at least be
larger than $m + 1$ for regression while should not be too large for better
efficiency, $n_{min}$ should be large enough to make the bootstrap work
properly~\cite{hall1997bootstrap}, and $n_{max}$ should be orders of magnitude
smaller than the optimal size for better performance.

\subsection{The Algorithm and Analysis}

\subsubsection{Algorithm Description}
By combining the implementations of all the components of the MISS framework
described above, we now give the formal description of the \textsc{$L^2$Miss}
algorithm, which finds the optimal sample size satisfying the error constraint.

\begin{algorithm}
    \KwIn{Dataset $D$, analytical function $f$, error bound $\epsilon$ with
    error probability $\delta$, the number of bootstrap samples $B$,
    initialization interval $I_n = \bracket{n_{\min}, n_{\max}}$, and the length
    of initial sequence $l$.} \KwOut{A sample $S \subsetneqq D$ with optimal
    size.} \DontPrintSemicolon{} $k \leftarrow 1$\; Compute $\Phi_n$ by
    \autoref{eq:initProb} from $I_n$\; $m \leftarrow$ the number of groups in
    $D$\; \While{True}{\uIf{$k \leq l$}{$\vec n \leftarrow \func{Random}{\Phi_n,
    m}$\tcp*{initialization}\label{ln:random}} \Else{Compute $\vec{\hat{\beta}}$
    by \autoref{eq:wlsSolution}\label{ln:fit}\tcp*{model fitting} $\vec{\beta}
    \leftarrow \func{Diagnostic}{\vec{\hat{\beta}}, \tau}$ by
    \autoref{alg:diagnostic}\label{ln:diagnostic}\; $\vec{n} \leftarrow
    \vec{\hat n}$ by \autoref{eq:predictSize}\label{ln:optimize}\tcp*{making
    predictions}} $S \leftarrow \func{StratifiedSample}{D,
    \vec{n}}$\tcp*{sampling}\label{ln:strsample} $e \leftarrow
    \func{Bootstrap}{S, f, \delta, B}$\tcp*{error estimation}
        \label{ln:bootstrap}
        $P \leftarrow P \cup \set{\paren{\vec{n}, e}}$\label{ln:myprof}\;
        \lIf{$e \leq \epsilon$}{\Return{S}\label{ln:mytest}} $k \leftarrow k +
        1$\;}
    \caption{The \textsc{$L^2$Miss} algorithm}
\label{alg:l2Miss}
\end{algorithm}

The algorithm is shown in \autoref{alg:l2Miss}. The algorithm follows the
sample-estimate-predict loop as described in \autoref{alg:framework}. The
algorithm first generates a sample size $\vec{n}$. The mechanism of generating
sample sizes can be divided into two phases according to the value of $k$: (i)
in the first phase (\myref{Line}{ln:random}), when $k \leq l$, the algorithm
generates sample sizes randomly from the interval $I_n$ for initialization; and
(ii) in the second phase (\cref{ln:fit,ln:diagnostic,ln:optimize}), the
algorithm predicts the optimal sample size according to the error profile
constructed from previous observation. After obtaining the sample size
$\vec{n}$, it draws a stratified sample from the dataset $D$ of the generated
sample size $\vec{n}$ in \myref{Line}{ln:strsample}. Then, it estimates the
approximation error using bootstrapping in \myref{Line}{ln:bootstrap} and adds
the new error record $\paren{\vec{n}, e}$ to the error profile $P$ in
\myref{Line}{ln:myprof}. Next, it tests whether the error constraint is
satisfied in \myref{Line}{ln:mytest} as in \autoref{alg:framework} for the
current sample size using the prediction interval. If so, the algorithm returns
successfully with the required sample. Otherwise, the algorithm continues with
the next iteration.

For sample size prediction, \textsc{$L^2$Miss} adopts a model-guided approach,
i.e., it first fits the error model using \autoref{eq:wlsSolution}. Then it
diagnoses whether it fails and tries to recover from the failure. If a failure
happens and is unrecoverable, the algorithm exits abnormally. Otherwise, it
calibrate the parameter if necessary. Afterwards, it predict the optimal sample
size by minimizing the total sample size subject to the approximated error
constraint using the error model, i.e., to solve \autoref{eq:prediction}, and
use the result as the predicted sample size in the current iteration.

\subsubsection{Analysis}
\header{Accuracy:} We first show that the \textsc{$L^2$Miss} algorithm is
accurate, which means that it correctly finds the sample size satisfying the
predefined error constraint. To simplify our discussion, first note that if any
failure happens, it can always be detected by setting the threshold $\tau$
sufficiently large. Therefore, we assume that no failure happens in our proof
such that the estimated parameter $\seqsub{\hat\beta}{,}{m}$ are positive.

We first argue that, in each iteration $k$ in the prediction phase, i.e., $k
>l$, if the sample size $\vec{n}^{(k)}$ does not satisfy the error constraint,
which means that the model underestimates the error at $\vec{n}^{(k)}$, then in
the $(k + 1)$th iteration, the sample size $\vec{n}^{(k + 1)}$ would be
increased such that the error would continue to decrease. The above claim is
formally stated in the following lemma.
\begin{lemma}
\label{lem:size}
    In iteration $k > l$, if the sample size $\vec{n}^{(k)}$  does not satisfy
    the error constraint, then in iteration $(k + 1)$, the sample size
    $\vec{n}^{(k + 1)} > \vec{n}^{(k)}$, which means that for all $1 \leq i \leq
    m$, $n_i^{(k)} < n_i^{(k + 1)}$.
\end{lemma}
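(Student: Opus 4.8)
The plan is to compare the two predicted sizes through the constraint that is \emph{active} at each predicted optimum together with the monotonicity of the error model. First I would record the two structural facts coming from the derivation of \autoref{eq:predictSize}. By the method of Lagrange multipliers the predicted optimum always makes the modeled constraint tight, so $\modsup{k}{k} = \log\epsilon$ and $\modsup{k+1}{k+1} = \log\epsilon$; and the stationarity conditions $\partial L/\partial n_i = 0$ force the proportional form $n_i = \lambda\hat\beta_i$, i.e.\ each predicted size is a common positive multiplier $\lambda$ times the corresponding slope estimate. In particular $n_i^{(k)} = \lambda^{(k)}\hat\beta_i^{(k)}$ and $n_i^{(k+1)} = \lambda^{(k+1)}\hat\beta_i^{(k+1)}$. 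Tightness and proportionality are the backbone of the argument.

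Next I would show that incorporating the violated constraint forces the refitted model to overshoot at the old point. Since $\vec{n}^{(k)}$ fails the error constraint, the newly observed error obeys $\log e^{(k)} > \log\epsilon = \modsup{k}{k}$, so the profile point added in iteration $k$ lies strictly above the previous model surface evaluated at $\vec{n}^{(k)}$. Refitting by weighted least squares (\autoref{eq:wlsSolution}) with the strictly positive weight $w_k = \sum_{i=1}^m n_i^{(k)}$ pulls the fitted value at $\vec{n}^{(k)}$ toward the new observation: by the standard influence-of-an-added-observation identity, the refitted value at the added design point is a convex combination $\tfrac{1}{1+g}\,\modsup{k}{k} + \tfrac{g}{1+g}\log e^{(k)}$, where $g>0$ is the leverage. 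Hence $\modsup{k}{k+1} > \log\epsilon$; the refit now predicts an error above the bound exactly at $\vec{n}^{(k)}$.

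To conclude I would use monotonicity. Under the no-failure assumption every $\hat\beta_i>0$, so $\partial H/\partial n_i = -\hat\beta_i/n_i < 0$ and $H\paren{\cdot\,;\vec{\hat\beta}^{(k+1)}}$ is strictly decreasing in each coordinate. Thus $\modsup{k}{k+1} > \log\epsilon$ says $\vec{n}^{(k)}$ is infeasible for the iteration-$(k+1)$ prediction problem, while $\modsup{k+1}{k+1} = \log\epsilon$ places $\vec{n}^{(k+1)}$ on the active constraint surface. Combining infeasibility with the proportional forms above, I would deduce that the common multiplier must grow, $\lambda^{(k+1)}>\lambda^{(k)}$, and therefore $n_i^{(k+1)} > n_i^{(k)}$ for every $i$.

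The hard part will be this last step. Infeasibility of $\vec{n}^{(k)}$ only yields, after dividing by $\sum_i\hat\beta_i^{(k+1)}$, that $\log\lambda^{(k+1)}$ exceeds the $\vec{\hat\beta}^{(k+1)}$-weighted average of the quantities $\log\paren{n_i^{(k)}/\hat\beta_i^{(k+1)}}$; by itself this controls a weighted \emph{mean} of the coordinate ratios $n_i^{(k+1)}/n_i^{(k)}$, not each ratio individually, so a decrease of $H$ could a priori be realized by raising some coordinates and lowering others. The strict componentwise inequality therefore hinges on the proportional form together with stability of the slope estimates across the two consecutive fits: if $\hat\beta_i^{(k)}$ and $\hat\beta_i^{(k+1)}$ agree (or differ only by a common scale), then $n_i^{(k)}/\hat\beta_i^{(k+1)} = \lambda^{(k)}$ is constant in $i$, the weighted average collapses to $\log\lambda^{(k)}$, and infeasibility gives $\lambda^{(k+1)}>\lambda^{(k)}$, hence $n_i^{(k+1)} > n_i^{(k)}$ for all $i$. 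I would thus either invoke such a slope-stability property---justified as the error profile grows and the fit stabilizes---or quantify the admissible slope drift, and separately confirm that the added observation has strictly positive leverage so that the overshoot $\modsup{k}{k+1} > \log\epsilon$ is strict.
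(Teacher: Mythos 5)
Your proposal follows essentially the same route as the paper's own proof: both arguments rest on (i) tightness of the modeled constraint at each predicted optimum, $\modsup{k}{k} = \log\epsilon$; (ii) the claim that refitting after adding the violated record forces $\modsup{k}{k+1} > \log\epsilon$; and (iii) coordinatewise monotonicity of $H$ together with the proportionality $n_i = \lambda\hat\beta_i$ coming from \autoref{eq:predictSize}. Where you improve on the paper is step (ii): the paper argues informally that the partial derivative of the squared loss with respect to the new residual is negative, so the fitted value at $\vec{n}^{(k)}$ ``should be increased''; your added-observation identity, with the refitted value at the added design point equal to the convex combination $\frac{1}{1+g}\modsup{k}{k} + \frac{g}{1+g}\log e^{(k)}$, makes that step exact, provided the (weighted) leverage $g$ of $\tilde{\vec n}^{(k)}$ under the fit \autoref{eq:wlsSolution} is strictly positive, which holds whenever $\tilde{N}\trans W \tilde{N}$ is nonsingular.

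The ``hard part'' you flag at the end is a genuine gap, and you should know it is present in the paper's proof as well, which steps over it silently. The paper deduces from monotonicity that \emph{some} coordinate satisfies $n_i^{(k+1)} > n_i^{(k)}$, then invokes $n_i^{(k+1)}/n_j^{(k+1)} = \hat\beta_i^{(k+1)}/\hat\beta_j^{(k+1)}$ to conclude that ``all $n_j$ also would be increased.'' That inference is valid only if $\vec{n}^{(k)}$ is itself proportional to the \emph{new} slopes $\hat\beta_1^{(k+1)}, \ldots, \hat\beta_m^{(k+1)}$ --- which is exactly the slope-ratio stability you identify as the missing hypothesis. Since $n_i^{(k)} = \lambda^{(k)}\hat\beta_i^{(k)}$ for $i = 1, \ldots, m$, and the direction of the slope vector can rotate between two consecutive WLS fits, infeasibility of $\vec{n}^{(k)}$ under the refitted model constrains only a $\hat\beta^{(k+1)}$-weighted mean of the ratios $n_i^{(k+1)}/n_i^{(k)}$, precisely as you observe; without ratio stability one can exhibit refits in which one coordinate rises while another falls, so the strict componentwise conclusion does not follow. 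Your proposal is therefore not weaker than the paper's argument --- it is the same argument carried out more carefully, with the paper's implicit assumption made explicit. To actually close the lemma one must either establish that the slope ratios stabilize across consecutive fits (e.g., as the error profile grows and the regression converges, or via the calibration in \autoref{alg:diagnostic}, which makes all $\hat\beta_i$ equal and trivially restores ratio stability in the recovered-failure case), or weaken the lemma's conclusion to the existence statement that monotonicity alone delivers.
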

\begin{proof}
   We denote the estimated parameter in iteration $k$ and $(k + 1)$ as
   $\vec{\hat\beta}^{(k)}$ and $\vec{\hat\beta}^{(k + 1)}$ respectively. The
   difference between ${\hat\beta}^{(k)}$ and $\vec{\hat\beta}^{(k + 1)}$ is due
   to adding the error record $\paren{\vec{n^{(k)}}, e^{(k)}}$ to the error
   profile to fit the model. This is equivalent to first (i) add an error record
   $\paren{\vec{n}^{(k)}, \modsup{k}{k}}$ to the profile, then (ii) change the
   value the record to $\paren{\vec{n}^{(k)}, e^{(k)}}$.

   In step (i), note that it would not change $\vec{\hat\beta}^{(k)}$ since the
   residual at $\vec{n}^{(k)}$ is $0$, and $\modsup{k}{k} = \log \epsilon$ since
   $\vec{n}^{(k)}$ is optimal in iteration $k$. In step (ii), we perform linear
   regression with loss function $J\paren{\vec\beta} = \sum_{i = 1}^k
   {\xi^{(i)}}^2$. Since the error constraint is not satisfied, we have $\log
   e^{(k)} > \log \epsilon = \modsup{k}{k}$. We assume that the residual
   $\xi^{(k)} = \modsup{k}{k} - \log e^{(k)} < 0$ without loss of generality,
   the partial derivative $\partial J / \partial \xi^{(k)} = 2\xi^{(k)} < 0$. It
   means to minimize $J\paren{\vec{\beta}}$, $\xi^{(k)}$ should be increased,
   and so as $\modsup{k}{k + 1}$ when $\vec\beta = \vec{\hat\beta}^{(k + 1)}$
   compared to $\vec{\hat\beta}^{(k)}$. Afterwards, to satisfy the error
   constraint, the algorithm predicts $\vec{n}^{(k + 1)}$ where the error is
   predicted to be smaller than the one at $\vec{n}^{(k)}$. To show that
   $\vec{n}^{(k + 1)} > \vec{n}^{(k)}$, note that in the model $H\paren{\vec{n};
   \vec\beta^{(k + 1)}}$, $\hat\beta_i > 0$ and $\partial H / \partial n_i^{(k +
   1)} = \hat\beta_i^{(k + 1)} > 0$ for every group $i = 1, 2, \ldots, m$, which
   means that to let the values $H\paren{\vec n; \vec\beta^{(k + 1)}}$ decrease
   from where $\vec n = \vec{n}^{(k)}$ to where $\vec n = \vec{n}^{(k + 1)}$,
   there exists some $1 \leq i \leq m$ such that $n_i^{(k + 1)} > n_i^{(k)}$.
   Moreover, by \autoref{eq:predictSize}, for all $j = 1, 2, \ldots, m$, $n_i /
   n_j = \hat\beta_i^{(k + 1)} / \hat\beta_j^{(k + 1)} > 0$. Therefore, all
   $n_j$ also would be increased. In conclusion, $\vec{n}^{(k + 1)} >
   \vec{n}^{(k)}$.
\end{proof}

Using the lemma above, we are able to show the correctness of the
\textsc{$L^2$Miss} algorithm by arguing that, the algorithm will keep increasing
the sample size such that the error keeps decreasing until the error constraint
is unsatisfied. Therefore, the algorithm finds the optimal sample size with
regards to the error constraint, which is stated in \autoref{thm:correct}. We
omit the proof due to the limit of space.

\begin{theorem}
\label{thm:correct}
    Given that no failure happens, \autoref{alg:l2Miss} correctly finds the
    optimal sample size satisfying the given error constraint.
\end{theorem}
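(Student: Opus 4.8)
The plan is to decompose the claim of \autoref{thm:correct} into three parts: \textbf{soundness} (whenever the algorithm returns, the error constraint holds), \textbf{termination} (the loop cannot run forever), and \textbf{optimality} (the returned size approximates the solution of the SSO problem in \autoref{def:ssoProblem}). Soundness is immediate: a sample is returned only through \myref{Line}{ln:mytest}, which is guarded by the test $e \leq \epsilon$, so the estimated error satisfies the constraint on exit; by the consistency of the bootstrap (\autoref{lem:bootstrap}) the estimated error agrees asymptotically with the true one, so the returned sample meets the constraint of \autoref{eq:ssoProblem}. This reduces the theorem to establishing that the loop exits and that the size at which it exits is (approximately) minimal.

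First I would establish termination, which is where \autoref{lem:size} does the real work. In the prediction phase ($k > l$), whenever the current size $\vec{n}^{(k)}$ fails the test, \autoref{lem:size} guarantees that the next size is strictly larger in every coordinate, $n_i^{(k+1)} > n_i^{(k)}$ for all $i$. Since the sizes are integers, each coordinate increases by at least one per failing iteration, so every $n_i^{(k)} \to \infty$. By \autoref{prop:asym} the error of each group converges at rate $O(n_i^{-b_i})$, hence the per-group errors and therefore $\mlterr$ tend to $0$; combined once more with the consistency of the bootstrap (\autoref{lem:bootstrap}) the estimated error $e$ also tends to $0$. Consequently some finite iteration satisfies $e \leq \epsilon$ and the loop exits. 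This argument relies on the standing no-failure assumption, under which every $\hat\beta_i > 0$, so that \autoref{eq:predictSize} is well defined and the monotonicity required by \autoref{lem:size} actually holds.

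Next I would argue optimality. The predicted size in \autoref{eq:predictSize} is obtained by solving \autoref{eq:prediction}, the model-approximated version of \autoref{eq:ssoProblem}, with the model constraint made tight, i.e. $\mltmod = \log\epsilon$; thus at each step the algorithm chooses the smallest total size for which the \emph{model} predicts feasibility. Because \autoref{lem:size} forbids overshooting — the size climbs monotonically from below rather than jumping past the feasible region — the first size for which the \emph{actual} constraint holds is also the smallest such size the procedure proposes. To connect this to the true optimum I would invoke the quality of the surrogate: the leading-term approximation justified in \autoref{sec:mem} together with the model-error bound of \autoref{thm:modelError} shows that for small $\epsilon$ the quantity $\errsub{g}$ tracked by the model stays close to $\errsub{L^2}$, so the model optimum converges to the SSO optimum.

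The hard part will be making the optimality claim fully rigorous, because optimality with respect to the fitted surrogate $\mltmod$ is not literally optimality for the true stochastic, black-box problem of \autoref{def:ssoProblem}. A complete argument would have to quantify three sources of slack — the gap between $\errsub{g}$ and $\errsub{L^2}$ (controlled by \autoref{thm:modelError}), the discrepancy between the fitted $\vec{\hat\beta}$ and the true $\vec{\beta}$, and the bootstrap estimation error (vanishing by \autoref{lem:bootstrap}) — and show that their combined effect is negligible. I would therefore settle for an asymptotic statement: as $\epsilon \to 0$ (equivalently, as the optimal sizes grow), all three slacks vanish and the size returned by \autoref{alg:l2Miss} converges to the true SSO optimum, which is the precise sense in which the algorithm is ``correct.''
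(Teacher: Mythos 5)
Your proposal is correct, and its load-bearing core --- termination via \autoref{lem:size} (strict coordinate-wise growth of the integer sample sizes in the prediction phase) combined with the consistency of the bootstrap (\autoref{lem:bootstrap}) so that the estimated error eventually falls below $\epsilon$, plus the trivial soundness of the exit test in \myref{Line}{ln:mytest} --- is exactly the paper's argument. The differences are at the margins, and both favor you. First, the paper's proof concludes by asserting that $\vec{n}^{(K)} > \vec{n}'$ implies $e^{(K)} < e' < \epsilon$, which tacitly assumes the (random, estimated) error is monotone in the sample size; your version sidesteps this by letting every coordinate diverge and invoking \autoref{prop:asym} together with bootstrap consistency to drive $e$ to $0$, which is the cleaner route. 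Second, the paper's proof in fact establishes only feasibility and termination: the word ``optimal'' in the statement is never addressed inside the proof, and near-optimality is argued separately and informally in the efficiency discussion (via \autoref{thm:modelError} and the vanishing estimation error). Your explicit third part --- isolating the three sources of slack (model error between $\errsub{g}$ and $\errsub{L^2}$, the discrepancy between $\vec{\hat\beta}$ and $\vec{\beta}$, and the bootstrap estimation error) and honestly retreating to an asymptotic optimality statement --- reconstructs that informal discussion and is more candid about its limits than the paper itself; nothing in your decomposition is wrong, it simply proves more than the paper's own proof does.
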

\begin{proof} In each iteration, the algorithm tests whether the error
   constraint is satisfied by comparing the error bound $\epsilon$ with the
   estimated approximation error $e$. If the constraint is satisfied, then it
   returns and the theorem is trivially correct. Otherwise, since we assume
   that there is no failure, which implies that the error $e$ is correctly
   estimated and converges to $0$ in probability. Then there exists $\vec{n}'
   > 0$ such that the corresponding error at $\vec{n}$ satisfies the error
   constraint, i.e., $e' < \epsilon$. By \myref{Lemma}{lem:size}, since the
   sample size continues to increase in every iteration in the prediction
   phase, there must exists a certain iteration $K > l$ such that
   $\vec{n}^{(K)} > \vec{n}'$, which implies that $e^{(K)} < e' < \epsilon$.
   Therefore, the error constraint is satisfied in iteration at most $K$.
   \end{proof}

\header{Efficiency:} We then show that the the \textsc{$L^2$Miss} algorithm is
efficient in terms of both total sample size and computational complexity. As
before, we assume that no failure happens to simplify the discussion.

In terms of total sample size, we claim that the \textsc{$L^2$Miss} algorithm
finds near-optimal sample sizes satisfying the error constraint. To show that,
we first argue that the the difference between any predicted sample size
$\vec{n}^{(k)}$ and the optimal size $\vec{n}'$ can be arbitrarily small when
$\vec{n}^{(k)}$ becomes sufficiently large. This is because that the difference
is caused by mainly the following two factor: (i) the model error introduced by
approximating $\errsub{L^2}$ with $\errsub{g}$, and (ii) the estimation error
introduced by the bootstrap method. By \autoref{thm:modelError} and the
consistency of the bootstrap, both types of error converge to $0$ in
probability.

In terms of computational complexity, we show that the \textsc{$L^2$Miss}
algorithm is efficient. First we claim that the number of iterations of the
\textsc{$L^2$Miss} algorithm is upper bounded by the difference of two predicted
sample sizes in the beginning and at the end of the prediction phase, which is
formally stated in the following lemma.
\begin{lemma}
\label{lem:upper}
    Let $K$ be the total number of iterations in \autoref{alg:l2Miss}. Then $K
    \leq l + 1 + \vec{n}^{(K)} - \vec{n}^{(l + 1)}$, where $l$ is the length of
    the initial sample size sequence, and the difference of two sample sizes in
    any pair of iterations $i, j$ is defined as $\vec{n}^{(i)} - \vec{n}^{(j)} =
    \max_{1 \leq i \leq m}\paren{n^{(i)}_i - n^{(j)}_i}$.
\end{lemma}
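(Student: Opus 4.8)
The plan is to convert the strict monotonicity of the predicted sample sizes established in \myref{Lemma}{lem:size} into an upper bound on the iteration count, using the fact that all sample sizes are integers. First I would partition the $K$ iterations into the initialization phase (iterations $1, \ldots, l$) and the prediction phase (iterations $l+1, \ldots, K$). The former contributes exactly the additive term $l+1$ together with the first prediction step, so the whole task reduces to controlling how many prediction iterations can occur, i.e.\ bounding $K - l$.

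The key step is a telescoping estimate on each individual group's sample size. In every prediction iteration $k$ with $l+1 \leq k \leq K-1$ the algorithm does not return, so the error constraint fails at $\vec{n}^{(k)}$, and \myref{Lemma}{lem:size} forces $n_i^{(k+1)} > n_i^{(k)}$ for every group $i$. Because the sizes are integers, this sharpens to the increment bound $n_i^{(k+1)} - n_i^{(k)} \geq 1$. Summing over the prediction phase for a fixed group then gives
\begin{equation*}
    n_i^{(K)} - n_i^{(l+1)} = \sum_{k = l+1}^{K-1} \paren{n_i^{(k+1)} - n_i^{(k)}} \geq K - l - 1,
\end{equation*}
since there are $K - l - 1$ summands, each at least $1$. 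As this holds for every group $i$, it holds in particular for the group attaining the maximum, so by the stated definition $\vec{n}^{(K)} - \vec{n}^{(l+1)} = \max_{1 \leq i \leq m}\paren{n_i^{(K)} - n_i^{(l+1)}}$ I obtain $\vec{n}^{(K)} - \vec{n}^{(l+1)} \geq K - l - 1$, which rearranges immediately to the claimed bound $K \leq l + 1 + \vec{n}^{(K)} - \vec{n}^{(l+1)}$.

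Since this is essentially a counting argument riding on an already-proved monotonicity lemma, I do not expect a genuine obstacle; the delicate points are bookkeeping rather than mathematics. One must keep the standing no-failure assumption of the analysis in force, as \myref{Lemma}{lem:size} relies on all $\hat\beta_i$ being positive, and one should verify the boundary case $K = l+1$, in which the first prediction iteration already meets the constraint: there the telescoping sum is empty and both sides of the inequality collapse to $l+1$, so the bound holds trivially. The only conceptual subtlety worth stating explicitly is why taking the maximum over groups preserves the per-group lower bound --- namely that a uniform lower bound on each coordinate's increase is inherited by the coordinate-wise maximum, which is exactly the metric in terms of which the difference $\vec{n}^{(K)} - \vec{n}^{(l+1)}$ is defined.
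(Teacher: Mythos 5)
Your proposal is correct and takes essentially the same route as the paper's own proof: both convert the strict per-group monotonicity of \myref{Lemma}{lem:size}, sharpened by integrality to an increment of at least $1$ per prediction iteration, into the bound $K - l \leq \vec{n}^{(K)} - \vec{n}^{(l+1)} + 1$. Your version is somewhat more careful than the paper's (explicit telescoping, the boundary case $K = l+1$, and the remark that the coordinate-wise maximum inherits the uniform per-group lower bound), but the underlying argument is identical.
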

\begin{proof} Note that $\vec{n}^{(l + 1)}$ and $\vec{n}^{(K)}$ is the first
   and the last sample size generated in the prediction phase in
   \autoref{alg:l2Miss}. As a result, $K \geq l + 1$, which implies that
   $\vec{n}^{(K)} > \vec{n}^{(l + 1)}$ by \myref{Lemma}{lem:size}. To obtain
   an upper bound of $K - l$, which is exactly the number of predictions
   \autoref{alg:l2Miss} makes, note that by \myref{Lemma}{lem:size}, in each
   iteration of the prediction phase, the sample size of each group increases
   by at least $1$. Therefore, to reach $\vec{n}^{(K)}$ from $\vec{n}^{(l +
   1)}$ would require at most $\vec{n}^{(K)} - \vec{n}^{(l + 1)}$ iterations,
   i.e., $K - l \leq \vec{n}^{(K)} - \vec{n}^{(l + 1)} + 1$, which completes
   the proof. \end{proof}

The difference between any two predicted sample size, in particular,
$\vec{n}^{(K)}$ and $\vec{n}^{(l + 1)}$, is no larger than twice of the maximum
of all the difference between the predicted and the optimal sample size.
Therefore, as discussed above, $\vec{n}^{(K)} - \vec{n}^{(l + 1)}$ can also be
arbitrarily small when the sample size is sufficiently large. As a result, the
number of iterations in the prediction phase $K - l$ can be also arbitrarily
small, which means that, with only a few predictions, the algorithm is able to
find a near-optimal sample size satisfying the error constraint.

Using the upper bound of the number of iterations $K$ in
\myref{Lemma}{lem:upper}, we give the computational complexity of the
\textsc{$L^2$Miss} algorithm in the following theorem.
\begin{theorem}
    Let $K$ be the total number of iterations in \autoref{alg:l2Miss}. The
    expected running time of \autoref{alg:l2Miss} is
    \begin{equation}
    \label{eq:bound}
        O\paren{ B\paren{ l \cdot m \cdot n_{\max} + \paren{K - l} C\paren{\vec{n}^{(K)}} } + \sum_{k = l + 1}^K Q_{k, m} }
    \end{equation}
    where $O\paren{Q_{k, m}}$ is the worst-case running time for $k-$observation
    and $m-$variate linear regression~\cite{wasserman2006all}.
\end{theorem}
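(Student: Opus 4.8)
The plan is to account for the total running time by summing the cost of each iteration, splitting the analysis into the initialization phase ($k \leq l$) and the prediction phase ($k > l$), and within each iteration attributing cost to the three expensive subroutines: stratified sampling, bootstrap error estimation, and (in the prediction phase) model fitting. The overall bound in \autoref{eq:bound} will then follow by bounding the number of iterations using \myref{Lemma}{lem:upper} and summing a per-iteration cost that is itself dominated by the bootstrap factor $B$.

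**First I would** handle the two sampling-dependent terms. In each initialization iteration we draw sample sizes in the interval $I_n$, so each group contributes at most $n_{\max}$ records; across $m$ groups and $l$ initialization iterations this gives the $l \cdot m \cdot n_{\max}$ contribution. In the prediction phase, by \myref{Lemma}{lem:size} the predicted sizes increase monotonically, so the size drawn in any prediction iteration is bounded by the final size $\vec{n}^{(K)}$, whose total is $C\paren{\vec{n}^{(K)}}$; since by \myref{Lemma}{lem:upper} there are at most $K - l$ prediction iterations, sampling over the whole prediction phase costs $O\paren{\paren{K - l} C\paren{\vec{n}^{(K)}}}$. Next I would note that each call to the bootstrap draws $B$ resamples and evaluates $f$ once per resample, so the estimation cost in every iteration carries the multiplicative factor $B$ against the sample size processed; factoring $B$ out of both sampling-and-estimation terms yields the outer $B\paren{\,\cdot\,}$ grouping. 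Finally, the model-fitting cost in \myref{Line}{ln:fit} is precisely the cost of a $k$-observation, $m$-variate weighted regression, namely $O\paren{Q_{k, m}}$; summing this over the prediction iterations $k = l+1, \ldots, K$ gives the trailing $\sum_{k = l + 1}^K Q_{k, m}$ term, which sits outside the $B$ factor because fitting does not involve resampling.

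**The hard part will be** justifying that the expectation over the randomness of the stratified sampling and the bootstrap closes the bound cleanly, and in particular that the number of iterations $K$ can legitimately appear inside the stated complexity. Since $K$ is itself a random variable determined by when the error constraint is first met, strictly speaking the "expected running time" requires either conditioning on $K$ or invoking the upper bound of \myref{Lemma}{lem:upper} so that the per-iteration costs telescope into the deterministic-looking expression. I would treat the bound as holding given the realized $K$ (equivalently, the expectation is taken over the per-iteration bootstrap and sampling randomness with the iteration structure fixed by \myref{Lemma}{lem:upper}), which is consistent with how $K$ is used as a free parameter in the theorem statement. The remaining steps — verifying that diagnostic (\myref{Line}{ln:diagnostic}) and prediction via \autoref{eq:predictSize} (\myref{Line}{ln:optimize}) each cost only $O(m)$ and are therefore absorbed into the dominant terms — are routine and I would state them without detailed calculation.
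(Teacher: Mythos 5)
Your proposal is correct and follows essentially the same route as the paper's proof: a per-iteration cost decomposition into sampling, bootstrap (factor $B$), and $O\paren{Q_{k,m}}$ fitting with $O(m)$ diagnostics absorbed, bounding $C\paren{\vec{n}^{(k)}}$ by $m \cdot n_{\max}$ in the initialization phase and by $C\paren{\vec{n}^{(K)}}$ in the prediction phase via the monotonicity of \myref{Lemma}{lem:size}, then summing. The only cosmetic difference is that you invoke \myref{Lemma}{lem:upper} to count the prediction iterations, whereas the count $K - l$ follows directly from the definition of $K$ (the paper does not use that lemma here); your added care about the expectation being taken with the realized $K$ fixed goes slightly beyond the paper's treatment and does no harm.
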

\begin{proof} In iteration $k$ of \autoref{alg:l2Miss}, the sampling method
   takes $O\paren{C\paren{\vec{n}^{(k)}}}$ expected time using gap
   sampling~\cite{erlandson2014faster} with indexing, the \textsc{Bootstrap}
   method takes $O\paren{B \cdot C\paren{\vec{n}^{(k)}}}$ time in the worst
   case where $B$ is the number of bootstrap samples. The algorithm takes
   $O\paren{Q_{k, m}}$ time to fit the parameter and $O\paren{m}$ time for
   diagnostic and generating predictions. In general, $O\paren{m} \subseteq
   O\paren{Q_{k, m}}$ Therefore, the algorithm takes $O\paren{B \cdot
   C\paren{\vec{n}^{(k)}} + Q_{k, m}}$ time in iteration $k$.

   To figure out the total time for \autoref{alg:l2Miss}, note that out of the
   $K$ iterations, the first $l$ iterations are in the initialization phase,
   where the total sample size $C\paren{\vec{n}^{(k)}} \leq m \cdot n_{\max}$,
   while the last $K - l$ iterations are in the prediction phase, where the
   total sample size $C\paren{\vec{n}^{(k)}} \leq C\paren{\vec{n}^{(K)}}$  by
   \myref{Lemma}{lem:size}. Therefore, substituting $C\paren{\vec{n}^{(k)}}$
   and summing all iterations gives exactly \autoref{eq:bound} in the theorem.
   \end{proof}

The theorem above shows that the algorithm is computationally efficient, since
its running time only depends on the sample size, which is near-optimal, rather
than the size of the entire data.

\section{Extensions}
\label{sec:extensions}

In this section, we extend the \textsc{$L^2$Miss} algorithm to accommodate other
error metrics under the MISS framework.

\subsection{Basic Idea}

Suppose that the new error constraint to be satisfied is defined in terms of
error metric $d'$ with error bound $\epsilon'$, while our \textsc{$L^2$Miss}
algorithm is defined in terms of $L^2$ norm error, denoted by $d_{L^2}$ with
error bound $\epsilon$. The following lemma gives a necessary condition that the
error constraint in terms of $d'$ holds.

\begin{lemma}
    \label{lem:extension}
    Suppose, for error probability $\delta$, $\prob{\mlterr \leq \epsilon'} \geq
    1 - \delta$, then $\prob{d'\paren{\vec{\hat\theta}, \vec{\theta}} \leq
    \epsilon} \geq 1 - \delta$ if $R \subseteq R'$.
    \begin{equation}
        R = \set{\vec{v} \middle| d'\paren{\vec{v}, \vec{\theta}} \leq \epsilon}, \quad R' = \set{\vec{v} \middle| d_{L^2}\paren{\vec{v}, \vec{\theta}} \leq \epsilon'}
    \end{equation}
\end{lemma}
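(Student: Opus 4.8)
The plan is to strip the statement of its metric dressing and reduce it to a single application of monotonicity of the probability measure. Note that the two quantities in the hypothesis and the conclusion are both events describing where the random estimator $\vec{\hat\theta}$ lands relative to the true answer $\vec{\theta}$: the hypothesis $\prob{\mlterr \leq \epsilon'} \geq 1 - \delta$ is precisely $\prob{\vec{\hat\theta} \in R'} \geq 1 - \delta$, and the target $\prob{d'\paren{\vec{\hat\theta}, \vec{\theta}} \leq \epsilon} \geq 1 - \delta$ is precisely $\prob{\vec{\hat\theta} \in R} \geq 1 - \delta$, since $R$ and $R'$ are by definition the acceptance regions $\set{\vec{v} : d'\paren{\vec{v}, \vec{\theta}} \leq \epsilon}$ and $\set{\vec{v} : d_{L^2}\paren{\vec{v}, \vec{\theta}} \leq \epsilon'}$. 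So the first step I would take is purely notational: rewrite both probabilistic error constraints as membership events for $\vec{\hat\theta}$ in these two fixed regions.

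Second, I would transfer the set inclusion $R \subseteq R'$ to an inclusion of the corresponding events. Because $R$ and $R'$ are deterministic subsets of the answer space, the hypothesis $R \subseteq R'$ nests the membership events $\set{\vec{\hat\theta} \in R}$ and $\set{\vec{\hat\theta} \in R'}$, and monotonicity of $\operatorname{Pr}$ with respect to this inclusion transfers the $1 - \delta$ confidence guarantee across the two error constraints, which is exactly the conclusion. This step is routine and needs no appeal to the error model, the bootstrap, or the asymptotic machinery of \autoref{sec:model} and \autoref{sec:l2miss}; it is a statement purely about nested regions and a single monotone probability measure, so none of the earlier results are actually invoked.

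The main obstacle is not the monotonicity step but the hypothesis $R \subseteq R'$ itself: for the lemma to be useful in the extensions, one must, for each concrete target metric $d'$, calibrate $\epsilon$ against $\epsilon'$ so that the two acceptance regions genuinely satisfy the containment. I therefore expect the substantive work in the surrounding section to be the \emph{geometric} comparison of the $d'$-region and the $L^2$-region, e.g.\ via norm-equivalence constants relating $d'$ to $d_{L^2}$, rather than the probabilistic argument. In the proof of the lemma proper, though, I would keep the presentation minimal: recast both constraints as membership events for $\vec{\hat\theta}$, invoke the inclusion $R \subseteq R'$, and close by monotonicity of $\operatorname{Pr}$.
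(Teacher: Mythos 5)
Your reduction to membership events is exactly the paper's own proof skeleton---the paper's entire argument is the one-liner that the inclusion of regions implies $\prob{\vec{\hat\theta} \in R'} \geq \prob{\vec{\hat\theta} \in R} \geq 1 - \delta$---but there is a genuine gap in the key step, and it is the same one the paper commits: with the inclusion as stated, monotonicity runs the wrong way. From $R \subseteq R'$ you obtain $\prob{\vec{\hat\theta} \in R} \leq \prob{\vec{\hat\theta} \in R'}$, so the hypothesis $\prob{\vec{\hat\theta} \in R'} \geq 1 - \delta$ yields no lower bound whatsoever on $\prob{\vec{\hat\theta} \in R}$; your phrase that monotonicity ``transfers the $1-\delta$ confidence guarantee across the two error constraints'' is precisely the step that fails, since under $R \subseteq R'$ the transfer goes from $R$ up to $R'$, not from $R'$ down to $R$. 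Indeed the lemma is false as literally stated: take $d' = d_{L^2}$ with $\epsilon$ much smaller than $\epsilon'$, so that $R \subsetneq R'$, and let $\vec{\hat\theta}$ concentrate on $R' \setminus R$; the hypothesis then holds with probability one while the conclusion fails entirely.

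The intended hypothesis is the reverse inclusion $R' \subseteq R$: the $L^2$-ball of radius $\epsilon'$ must sit inside the $d'$-acceptance region of radius $\epsilon$, whence $\set{\vec{\hat\theta} \in R'} \subseteq \set{\vec{\hat\theta} \in R}$ and $\prob{\vec{\hat\theta} \in R} \geq \prob{\vec{\hat\theta} \in R'} \geq 1 - \delta$. This is what the paper's conversion functions actually arrange in the sequel---for instance, \autoref{thm:maxBound} shows that $\errsub{L^2} \leq \epsilon$ forces $\errsub{L^\infty} \leq \epsilon$, i.e., the $L^2$-region is contained in the $d_{L^\infty}$-region, and \autoref{thm:ordering} and the maximal-difference bound have the same shape. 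So your third paragraph's diagnosis is on target: the substantive content of the extension machinery is the geometric calibration of $\epsilon'$ against $\epsilon$, and that calibration is exactly what certifies $R' \subseteq R$. But a blind proof is supposed to be sound on its own terms; you needed to notice that the stated inclusion is a typo (the paper's proof chain, which asserts its conclusion $\prob{\vec{\hat\theta} \in R} \geq 1-\delta$ as the unearned last link, has the same reversal) and flip it, rather than assert the probabilistic transfer in the unworkable direction.
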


\begin{proof} 
    By the definition of probability, $R \subseteq R'$ implies that
   $\prob{\vec{\hat\theta} \in R'} \geq \prob{\vec{\hat\theta} \in R} \geq 1 -
   \delta$. 
\end{proof}

\myref{Lemma}{lem:extension} implies that, in order to find the optimal sample
size with regards to $d'$, all we need is to find the equivalent error bound
$\epsilon'$ such that $R \subseteq R'$, and then call the \textsc{$L^2$Miss}
algorithm to find the optimal sample size with error bound $\epsilon'$. The
process is formally stated in \autoref{alg:extension}.
\begin{algorithm}
    \KwIn{User-defined error bound $\epsilon'$ and the error bound conversion
    function $\Gamma$ defined interms of $d'$} \KwOut{A sample $S$ of optimal
    size.} \DontPrintSemicolon{} $\epsilon' \leftarrow \Gamma\paren{\epsilon}$\;
    Return the result of calling \textsc{$L^2$Miss} with error bound
    $\epsilon'$\;
    \caption{Extending the \textsc{$L^2$Miss} algorithm}
    \label{alg:extension}
\end{algorithm}

We define an error bound conversion function $\Gamma$, which converts the
user-given error bound $\epsilon$ in terms of the new metric $d'$ to the
equivalent error bound $\epsilon'$ in terms of $d_{L^2}$ such that $R \subseteq
R'$ in \myref{Lemma}{lem:extension} holds. Once we know $d'$ specifically,
finding $\Gamma$ is not hard.

\autoref{alg:extension} only serves as a framework. In order to obtain a
concrete sample size optimization algorithm, we need to find the error bound
conversion function $\Gamma$. In the remaining of this section, we will show in
detail how to extend \textsc{$L^2$Miss} to work with other widely-used error
metrics by finding suitable $\Gamma$.

\subsection{Maximum Error}

One error metric that is also commonly used is the maximum of errors of all
groups. With such metric, users are guaranteed that the approximation error is
no more than some predefined bounds. The maximum error of $\vec{\hat\theta}$
with respect to the true one $\vec{\theta}$ is defined as
\begin{equation*}
    \errsub{L^\infty} = \max_{1 \leq i \leq m}\paren{\hat\theta_i - \theta_i},
\end{equation*}
which is exactly the $L^\infty$ norm of the errors.

To convert a given error bound $\epsilon$ in terms of $d_{L^\infty}$ into the
error bound $\epsilon'$ such that $\errsub{L^\infty} \leq \epsilon$ if
$\errsub{L^2} \leq \epsilon'$, we define the conversion function as
$\Gamma\paren{\epsilon} = \epsilon$. We denote the algorithm that finds optimal
sample sizes for $d_{L^\infty}$  as \textsc{MaxMiss}. To show the correctness of
the \textsc{MaxMiss} algorithm, we have the following theorem.
\begin{theorem}
    For all $\epsilon > 0$, $\errsub{L^\infty} \leq \epsilon$ if $\errsub{L^2}
    \leq \epsilon$.
\label{thm:maxBound}
\end{theorem}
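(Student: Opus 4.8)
The plan is to reduce the claim to the elementary fact that the $L^\infty$ norm of a vector never exceeds its $L^2$ norm. First I would write both error metrics as norms of the residual vector $\vec{e} = \vec{\hat\theta} - \vec{\theta}$, namely $\errsub{L^\infty} = \max_{1 \leq i \leq m} \left| e_i \right|$ and $\errsub{L^2} = \sqrt{\sum_{i = 1}^m e_i^2}$, so that the theorem becomes the assertion $\errsub{L^\infty} \leq \errsub{L^2}$ combined with the hypothesis.

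Next I would establish the norm inequality directly. Let $i^*$ be an index achieving the maximum, so $\errsub{L^\infty} = \left| e_{i^*} \right|$. Since $e_{i^*}^2$ is one of the nonnegative summands of $\sum_{i = 1}^m e_i^2$, we have $e_{i^*}^2 \leq \sum_{i = 1}^m e_i^2$. Taking square roots of both sides, which is valid as both are nonnegative, gives $\errsub{L^\infty} = \left| e_{i^*} \right| \leq \sqrt{\sum_{i = 1}^m e_i^2} = \errsub{L^2}$.

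Finally I would chain this with the assumption: if $\errsub{L^2} \leq \epsilon$, then $\errsub{L^\infty} \leq \errsub{L^2} \leq \epsilon$, which is exactly the desired conclusion and also justifies the choice $\Gamma\paren{\epsilon} = \epsilon$ for the \textsc{MaxMiss} conversion function.

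I do not anticipate a genuine obstacle here, since the statement is a special case of the monotonicity of the $\ell^p$ norms in $p$. The only subtlety to note is that the text writes $d_{L^\infty}$ as $\max_{1 \leq i \leq m}\paren{\hat\theta_i - \theta_i}$ without absolute-value bars; I would interpret it as the genuine $L^\infty$ norm $\max_i \left| \hat\theta_i - \theta_i \right|$ so the bound holds for residuals of either sign, remarking that the signed maximum is in any case dominated by the absolute maximum and hence the argument carries through unchanged.
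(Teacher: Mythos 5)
Your proof is correct and takes essentially the same route as the paper, which likewise reduces the claim to the elementary inequality $\errsub{L^\infty} \leq \errsub{L^2}$ and calls it immediate; you merely spell out the maximizing-index argument and the square-root step. Your remark about the missing absolute-value bars is apt---the paper even verbally misstates the fact as ``the $L^2$ norm is no larger than the $L^\infty$ norm'' while writing the correct inequality---but this does not change the substance of either argument.
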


\begin{proof}
   Immediate from the fact that the $L^2$ norm is no larger than the $L^\infty$
   norm, i.e., $\max_{1 \leq i \leq m} \paren{{\hat\theta}_i - {\theta}_i} \leq
   \sqrt{\sum_{i = 1}^{m}{\paren{{\hat\theta}_i - {\theta}_i}^2}} $.
\end{proof}

Moreover, to support other $L^p$ norms as error metrics, denoted by $d_{L^p}$
where $p \geq 1$, observe that for all $p > 2$, $\errsub{L^p} \leq \errsub{L^2}$
and $\errsub{L^1} \leq \sqrt{m} \cdot \errsub{L^2}$ where $m$ is the number of
groups~\cite{kreyszig1989introductory}. By taking advantage of such
inequalities, the corresponding conversion functions can be derived trivially.

\subsection{Ordering}
\label{sec:ordering}

In many scenarios, we want approximate results to preserve the same order as the
true one in order to, for example, answer approximate Top-$k$ queries or to
visualize analytical results with accurate
trends\cite{DBLP:journals/pvldb/KimBPIMR15}. Such requirement can be formalize
as the \emph{correct-ordering property}~\cite{DBLP:journals/pvldb/KimBPIMR15},
which is defined as follows.
\begin{definition}
    Suppose that ${\theta}_{\eta_1} \leq {\theta}_{\eta_2} \leq \cdots \leq
    {\theta}_{\eta_m}$ for a true query result $\vec{\theta}$, the approximate
    query result $\vec{\hat\theta}$ preserves the correct-ordering property with
    respect to $\vec{\theta}$ if ${\hat\theta}_{\eta_1} \leq
    {\hat\theta}_{\eta_2} \leq \cdots \leq {\hat\theta}_{\eta_m}$ where
    $\seqsub{\eta}{,}{m}$ is a permutation of $1, 2, \ldots, m$.
\label{def:ordering}
\end{definition}

In order to provide ordering guarantee using the MISS framework, we consider
ordering as an implicit error metric and the correct-ordering property as an
error bound defined in terms of $\vec{\hat\theta}$, i.e.,
$\epsilon\paren{\vec{\hat\theta}}$, instead of a predefined constant. The reason
of doing so is that the difference between $\vec{\theta}$ and $\vec{\hat\theta}$
needs to be small enough to preserve such property.

To find the conversion function $\Gamma$, we treat $\vec{\theta}$ as a point in
a $m$-dimensional space. The coordinates are denoted by $\seqsub{x}{,}{m}$. We
define the conversion function $\Gamma\paren{\epsilon\paren{\vec{\hat\theta}}} =
\min_{1 \leq i < j \leq m}\rho_{ij}$ where $\rho_{ij}$ denotes the distance from
the point $\vec{\hat\theta}$ to the hyperplane $x_i = x_j$. We denote the
algorithm that finds optimal sample sizes preserving the correct-ordering
property as \textsc{OrderMiss}.

The following theorem claims the correctness of the \textsc{OrderMiss}
algorithm.
\begin{theorem}
\label{thm:ordering}
    $\vec{\hat\theta}$ preserves the correct-ordering property if
    \begin{equation}
        \errsub{L^2} \leq \min_{1 \leq i < j \leq m}\rho_{ij}
    \label{eq:orderCondition}
    \end{equation}
    where $\rho_{ij}$ denotes the distance from $\vec{\hat\theta}$ to the
    hyperplane $x_i = x_j$.
\end{theorem}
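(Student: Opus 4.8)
The plan is to reduce the global ordering condition of \myref{Definition}{def:ordering} to a family of pairwise sign conditions and then verify each one with a single application of the Cauchy--Schwarz inequality. First I would make the geometry explicit: the hyperplane $x_i = x_j$ has normal $\vec{e}_i - \vec{e}_j$ of length $\sqrt{2}$, so the distance from $\vec{\hat\theta}$ to it is $\rho_{ij} = \left| \hat\theta_i - \hat\theta_j \right| / \sqrt{2}$. These $\binom{m}{2}$ hyperplanes are precisely the walls that separate points according to the relative order of their coordinates, so $\vec{\hat\theta}$ inherits the ordering of $\vec{\theta}$ exactly when, for every pair $i < j$, the gaps $\hat\theta_i - \hat\theta_j$ and $\theta_i - \theta_j$ do not have strictly opposite signs. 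Thus it suffices to establish this pairwise sign-agreement under the hypothesis $\errsub{L^2} \leq \min_{1 \leq i < j \leq m}\rho_{ij}$ in \autoref{eq:orderCondition}.

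For a fixed pair I would write the perturbation as $\vec{\Delta} = \vec{\hat\theta} - \vec{\theta}$ and observe that $\paren{\hat\theta_i - \hat\theta_j} - \paren{\theta_i - \theta_j} = \Delta_i - \Delta_j = \left\langle \vec{\Delta}, \vec{e}_i - \vec{e}_j \right\rangle$. Cauchy--Schwarz then gives $\left| \Delta_i - \Delta_j \right| \leq \sqrt{2}\,\norm{\vec{\Delta}} = \sqrt{2}\,\errsub{L^2}$. Combining this with the hypothesis $\errsub{L^2} \leq \rho_{ij}$ and the formula for $\rho_{ij}$ yields $\left| \Delta_i - \Delta_j \right| \leq \sqrt{2}\,\rho_{ij} = \left| \hat\theta_i - \hat\theta_j \right|$. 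Since the perturbation of the gap $\hat\theta_i - \hat\theta_j$ never exceeds the gap in magnitude, adding $\Delta_i - \Delta_j$ cannot reverse its sign, so $\theta_i - \theta_j = \paren{\hat\theta_i - \hat\theta_j} + \paren{\Delta_i - \Delta_j}$ retains the sign of $\hat\theta_i - \hat\theta_j$. This is exactly the pairwise agreement needed; applying it to the consecutive pairs $\paren{\eta_k, \eta_{k+1}}$ of the sorting permutation recovers the full chain $\hat\theta_{\eta_1} \leq \cdots \leq \hat\theta_{\eta_m}$ required by \myref{Definition}{def:ordering}.

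I expect the main obstacle to be the degenerate case of ties, i.e.\ when $\theta_i = \theta_j$ for some pair. There $\vec{\theta}$ lies on the wall $x_i = x_j$, the inequality chain above becomes tight, and the argument only forces $\hat\theta_i - \hat\theta_j$ and $\theta_i - \theta_j$ to agree \emph{weakly} rather than strictly. I would dispose of this by invoking the weak-inequality form of \myref{Definition}{def:ordering}: a tie in $\vec{\theta}$ places no strict constraint on the relative order of the corresponding entries of $\vec{\hat\theta}$, so weak sign-agreement already suffices. Alternatively, under the generic assumption that the true result $\vec{\theta}$ has distinct coordinates, every gap $\theta_i - \theta_j$ is nonzero and the sign preservation is strict, delivering the conclusion without caveats.
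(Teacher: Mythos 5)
Your proof is correct, but it takes a genuinely different route from the paper's. The paper argues globally and by contradiction: the hyperplanes $x_i = x_j$ partition the space into cells, each fixing one ordering of the coordinates, so if $\vec{\hat\theta}$ and $\vec{\theta}$ were ordered differently they would lie on opposite sides of some wall $x_{\eta_i} = x_{\eta_j}$, forcing $\errsub{L^2} > \rho_{\eta_i \eta_j}$ and contradicting \autoref{eq:orderCondition}; the explicit formula $\rho_{ij} = \left| \hat\theta_i - \hat\theta_j \right| / \sqrt{2}$ is not used there at all but deferred to \autoref{thm:orderBound}. You instead verify the conclusion directly and pairwise: you compute $\rho_{ij}$ from the normal $\vec{e}_i - \vec{e}_j$, bound the gap perturbation $\Delta_i - \Delta_j = \left\langle \vec{\Delta}, \vec{e}_i - \vec{e}_j \right\rangle$ by Cauchy--Schwarz as $\left| \Delta_i - \Delta_j \right| \leq \sqrt{2}\, \errsub{L^2} \leq \sqrt{2}\, \rho_{ij} = \left| \hat\theta_i - \hat\theta_j \right|$, and conclude no gap can strictly change sign. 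The two arguments encode the same geometry (a displacement of length $\errsub{L^2}$ cannot cross a wall at distance at least $\errsub{L^2}$), but yours buys two things: it folds in the computation the paper postpones to \autoref{thm:orderBound}, and it treats the boundary case $\theta_i = \theta_j$ explicitly --- a real case, since when \autoref{eq:orderCondition} holds with equality $\vec{\theta}$ can lie exactly on a wall, and the theorem then survives only under the tie-tolerant reading of \autoref{def:ordering} that you adopt (pick the sorting permutation of $\vec{\theta}$ whose tie-breaking follows $\vec{\hat\theta}$); the paper's cell argument silently glosses this degenerate case, while the paper's proof in exchange is shorter and avoids any bookkeeping with permutations. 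One slip, which is harmless: with $\vec{\Delta} = \vec{\hat\theta} - \vec{\theta}$ the correct identity is $\theta_i - \theta_j = \paren{\hat\theta_i - \hat\theta_j} - \paren{\Delta_i - \Delta_j}$, not $+$; since your argument only uses $\left| \Delta_i - \Delta_j \right|$, the sign-agreement conclusion stands either way.
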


\begin{proof}
   Observe that the $\binom m 2$ hyperplanes $x_i = x_j,\ 1\leq i < j \leq m$,
   divide the m-dimensional space into $2\binom m 2$ subspaces, each of which
   determines a unique order of the $m$ coordinates of a point in the space.
   Therefore, $\vec{\hat\theta}$ preserves the same order as $\vec{\theta}$ if
   and only if they fall into the same subspaces.

   We prove that $\vec{\hat\theta}$ and $\vec{\theta}$ are in the same subsapce
   if $\errsub{L^2} \leq \min_{1 \leq i < j \leq m}\rho_{ij}$ by contradiction.
   Suppose that $\vec{\hat\theta}$ and $\vec{\theta}$ are in different
   subspaces. Then there exists a hyperplane $x_{\eta_i} = x_{\eta_j}, 1\leq
   \eta_i < \eta_j \leq m$, such that $\vec{\hat\theta}$ and $\vec{\theta}$ are
   on the different sides of $x_{\eta_i} = x_{\eta_j}$. Therefore $\errsub{L^2}
   > \rho_{\eta_i\eta_j}$, which contradicts with \autoref{eq:orderCondition}.
\end{proof}

In order to apply the error bound conversion function, one naive algorithm is to
enumerate all pairs of $i, j$, compute the distance $\rho_{ij}$ by its
definition and find the minimum, which takes $O\paren{m^2}$ time where $m$ is
the number of groups. We propose a more efficient algorithm called
\textsc{OrderBound} that finds the error bound in $O\paren{m\log m}$ time if a
comparison-based sorting algorithm is used. \textsc{OrderBound} first sorts the
$m$ entries of $\vec{\hat\theta}$, then returns the minimum of the difference of
every two adjacent entries of the sorted vector as the minimum of difference of
all pairs of entries. The details of the \textsc{OrderBound} algorithm is shown
in \autoref{alg:orderBound}.
\begin{algorithm}
    \KwIn{Approximate query result $\vec{\hat\theta}$.} \KwOut{Error bound
    $\epsilon'$ such that $\vec{\hat\theta}$ preserves the correct ordering
    property if $\errsub{L^2} \leq \epsilon'$.} \DontPrintSemicolon
    $\vec{\hat\theta}^* = \func{Sort}{\vec{\hat\theta}}$
    \tcp*{${\hat\theta}^*_{i + 1} \geq {\hat\theta}^*_i, \,i = 1, 2, \ldots, m$}
    $\rho \leftarrow$ new $(m-1)$-dimensional vector\; \For{$i \leftarrow 1$
    \KwTo{} $m - 1$}{$\rho_i \leftarrow {\hat\theta}_{i + 1}^* -
    {\hat\theta}_i^*$} \Return{$\left( \min_{1 \leq i \leq m -1}{\rho_i} \right)
    / \sqrt{2}$}
    \caption{\textsc{OrderBound}}
\label{alg:orderBound}
\end{algorithm}

Since $\vec{\hat\theta}$ is random, to make the \textsc{OrderMiss} algorithm
more reliable, it is beneficial to repeat computing $\vec{\hat\theta}$ for
multiple times on different samples and take the average before using
\autoref{alg:orderBound} for error bound conversion.

The correctness of the \textsc{OrderBound} algorithm follows the theorem below.
\begin{theorem}
\label{thm:orderBound}
    Suppose ${\hat\theta}_{\eta_1} \leq {\hat\theta}_{\eta_2} \leq \cdots \leq
    {\hat\theta}_{\eta_m}$ where $\seqsub{\eta}{,}{m}$ is a permutation of $1,
    2, \ldots, m$, then
    \begin{equation*}
        \min_{1 \leq i < j \leq m}\rho_{ij} = \frac{1}{\sqrt{2}} \min_{1 \leq k \leq m - 1}\left( {\hat\theta}_{\eta_{k + 1}} - {\hat\theta}_{\eta_k} \right)
    \end{equation*}
\end{theorem}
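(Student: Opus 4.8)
The plan is to reduce the statement to a purely one-dimensional fact about sorted real numbers, after first evaluating the geometric quantity $\rho_{ij}$ in closed form.

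First I would compute the point-to-hyperplane distance. The hyperplane $x_i = x_j$ can be written as $\set{\vec{x} \mvert \paren{\vec{e}_i - \vec{e}_j}\trans \vec{x} = 0}$, whose normal vector $\vec{e}_i - \vec{e}_j$ has norm $\norm{\vec{e}_i - \vec{e}_j} = \sqrt{2}$. The standard formula for the distance from a point to a hyperplane then gives
\[
    \rho_{ij} = \frac{\left| \hat\theta_i - \hat\theta_j \right|}{\sqrt{2}}.
\]
This pulls the common factor $1/\sqrt{2}$ out of the minimization, so it suffices to prove that $\min_{1 \leq i < j \leq m} \left| \hat\theta_i - \hat\theta_j \right| = \min_{1 \leq k \leq m - 1}\paren{\hat\theta_{\eta_{k+1}} - \hat\theta_{\eta_k}}$.

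Next I would reindex through the sorting permutation $\eta$. Since the values $\hat\theta_{\eta_k}$ are nondecreasing, every pairwise absolute difference equals $\hat\theta_{\eta_b} - \hat\theta_{\eta_a}$ for some $a < b$, and telescoping gives $\hat\theta_{\eta_b} - \hat\theta_{\eta_a} = \sum_{k=a}^{b-1}\paren{\hat\theta_{\eta_{k+1}} - \hat\theta_{\eta_k}}$, a sum of nonnegative adjacent gaps. Two inclusions then finish the argument. For the $\geq$ direction, each such telescoping sum is at least any one of its (nonnegative) summands, hence at least the smallest adjacent gap, so the minimum over all pairs is bounded below by the minimum adjacent gap. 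For the $\leq$ direction, the adjacent gaps are themselves a subset of all pairwise differences, so the minimum over all pairs cannot exceed the minimum adjacent gap. Combining the two inclusions yields the equality, and restoring the factor $1/\sqrt{2}$ gives the claimed identity.

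The argument is essentially routine; the only place demanding care is the distance computation—getting the $\sqrt{2}$ normalization from $\norm{\vec{e}_i - \vec{e}_j} = \sqrt{2}$ correct—together with using the sorting reindexing to drop the absolute value cleanly. I do not expect a genuine obstacle here, since the heart of the matter is the elementary observation that among sorted reals the smallest gap is always realized by an adjacent pair.
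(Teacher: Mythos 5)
Your proposal is correct and follows essentially the same route as the paper: both first evaluate $\rho_{ij} = \left|\hat\theta_i - \hat\theta_j\right| / \sqrt{2}$ from the normal vector of the hyperplane $x_i = x_j$, and then reduce to the one-dimensional fact that among sorted reals the minimum pairwise gap is attained at an adjacent pair. The only difference is cosmetic: you establish the adjacency step directly via telescoping the nonnegative adjacent gaps (arguably cleaner), whereas the paper argues it by contradiction on the minimizing pair.
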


\begin{proof} By definition, the distance $\rho_{ij}$ is the projection of
   $\vec{\hat{\theta}}$ onto the normal vector of the hyperplane $x_i =
   x_j$~\cite{boyd2004convex}. Therefore, $\rho_{ij} = \left| {\hat\theta}_i -
   {\hat\theta}_i \right| / \sqrt{2}$, which implies $\min_{i, j}\rho_{ij} =
   \min_{i, j}\left| {\hat\theta}_i - {\hat\theta}_j \right| / \sqrt{2}$.
   Suppose for contradiction that $\eta_k, \eta_k' = \argmin_{i,j} \left|
   {\hat\theta}_i - {\hat\theta}_j \right|$ and ${\hat\theta}_{\eta_k} \leq
   {\hat\theta}_{\eta_k'}$ while $\eta_k' \neq \eta_{k + 1}$ without loss of
   generality. Then by definition, ${\hat\theta}_{\eta_k} \leq
   {\hat\theta}_{\eta_{k + 1}} \leq {\hat\theta}_{\eta_k'}$. Therefore, $
   {\hat\theta}_{\eta_{k + 1}} - {\hat\theta}_{\eta_k} \leq
   {\hat\theta}_{\eta_k'} - {\hat\theta}_{\eta_k}$, which contradicts with the
   definition of $\eta_k, \eta_k'$. \end{proof}

\subsection{Maximal Difference Error}

Sometimes, ensuring the correct-ordering property is not enough for analytical
tasks such as visualization. This is because the difference between every pair
of groups of approximate results might differ largely, such that it is difficult
to draw valid quantitative conclusions on the trends of some parameters of the
data even though they preserves the same order as the true results.

Therefore, a property stronger than the correct-ordering property is required to
quantify trends. We define the approximate result $\vec{\hat\theta}$ to have
bounded maximal difference error if and only if the maximum of the approximation
error of any two groups is bounded, which is formalized as follows.
\begin{definition}
    The maximum difference error of an approximate query result
    $\vec{\hat\theta}$ with respect to the true result $\vec{\theta}$ is defined
    as
    \begin{equation}
        \errsub{\Delta} = \max_{1 \leq i, j \leq m} \left| \paren{{\hat\theta}_i - {\hat\theta}_j} - \paren{{\theta}_i - {\theta}_j}  \right|
    \end{equation}
\end{definition}

To convert an error bound in terms of $d_{\Delta}$ to the error bound in terms
of $d_{L^2}$, we define the conversion function $\Gamma\paren{\epsilon} =
\epsilon / \sqrt{2}$. The algorithm that finds the optimal sample sizes for
$d_{\Delta}$ is denoted as \textsc{DiffMiss}. The correctness of the
\textsc{DiffMiss} algorithm follows the following theorem. 

\begin{theorem}
    $\forall \epsilon > 0$, $\errsub{\Delta} \leq \epsilon$ if $\errsub{L^2}
    \leq \epsilon / \sqrt{2}$.
\end{theorem}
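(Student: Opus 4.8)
The plan is to reduce the maximal difference error to a pairwise bound on the per-group errors, then apply a standard norm inequality whose constant yields exactly the factor $1/\sqrt{2}$. First I would write $u_i = {\hat\theta}_i - {\theta}_i$ for the error of group $i$ and observe that the quantity inside the absolute value in the definition of $\errsub{\Delta}$ regroups as $\paren{{\hat\theta}_i - {\hat\theta}_j} - \paren{{\theta}_i - {\theta}_j} = u_i - u_j$. Hence $\errsub{\Delta} = \max_{1 \leq i, j \leq m} \left| u_i - u_j \right|$, so it suffices to bound a single pairwise term $\left| u_i - u_j \right|$ uniformly by $\epsilon$.

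Next I would bound each pairwise term. By the triangle inequality followed by the Cauchy--Schwarz inequality applied to the two nonnegative numbers $\left| u_i \right|$ and $\left| u_j \right|$ (equivalently, the QM--AM inequality),
\begin{equation*}
    \left| u_i - u_j \right| \leq \left| u_i \right| + \left| u_j \right| \leq \sqrt{2}\,\sqrt{u_i^2 + u_j^2}.
\end{equation*}
Since $u_i^2 + u_j^2$ is a partial sum of the nonnegative terms comprising $\sum_{k = 1}^m u_k^2 = \paren{\errsub{L^2}}^2$, we have $\sqrt{u_i^2 + u_j^2} \leq \errsub{L^2}$. Combining these and maximizing over all pairs $i, j$ yields the clean relation $\errsub{\Delta} \leq \sqrt{2}\,\errsub{L^2}$. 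Finally, under the hypothesis $\errsub{L^2} \leq \epsilon / \sqrt{2}$, the chain gives $\errsub{\Delta} \leq \sqrt{2} \cdot \epsilon / \sqrt{2} = \epsilon$, which is the claim and, via \myref{Lemma}{lem:extension}, establishes the correctness of \textsc{DiffMiss}.

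I do not anticipate a genuine obstacle; the single point demanding care is the constant. The factor $\sqrt{2}$ in the step $\left| u_i \right| + \left| u_j \right| \leq \sqrt{2}\,\sqrt{u_i^2 + u_j^2}$ is precisely what forces the conversion function $\Gamma\paren{\epsilon} = \epsilon / \sqrt{2}$, and it is tight: equality is attained when $u_i = -u_j$ with all remaining errors zero. Thus $\epsilon / \sqrt{2}$ is the sharpest admissible bound and cannot be relaxed without violating the containment $R \subseteq R'$ required by the extension framework.
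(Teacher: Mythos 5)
Your proof is correct, and it takes a genuinely different route from the paper's. The paper proves the bound by constrained optimization: it forms the Lagrange function $L\paren{\vec{\hat\theta};\lambda} = \Delta_{ij} - \lambda\paren{\errsub{L^2} - \epsilon/\sqrt{2}}$, solves the stationarity conditions, and reads off that the extrema of $\Delta_{ij}$ over the $L^2$ ball of radius $\epsilon/\sqrt{2}$ are $\pm\epsilon$ --- in effect computing the operator norm of the linear functional $\vec{u} \mapsto u_i - u_j$ on that ball. You instead reduce to the per-group error vector $u_i = \hat\theta_i - \theta_i$ and run the elementary chain $\left| u_i - u_j \right| \leq \left| u_i \right| + \left| u_j \right| \leq \sqrt{2}\sqrt{u_i^2 + u_j^2} \leq \sqrt{2}\,\errsub{L^2}$, which is airtight at every step. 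Your version buys several things the paper's does not: it avoids the paper's unproved symmetry assertion and its silent assumption that the stationary point is the global maximum on the boundary; it yields the clean standalone inequality $\errsub{\Delta} \leq \sqrt{2}\,\errsub{L^2}$ rather than only the conditional statement; and your explicit tightness witness ($u_i = -u_j$, all other coordinates zero) shows the constant $\sqrt{2}$ is sharp, so $\Gamma\paren{\epsilon} = \epsilon/\sqrt{2}$ is the best possible conversion --- a fact the paper never establishes. What the Lagrangian approach buys in exchange is a template that generalizes mechanically to other metrics $d'$ whose level sets are less amenable to coordinatewise inequalities, which is presumably why the paper frames all its extensions that way. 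One small caution on your closing remark: the containment your theorem actually certifies is that the $L^2$ ball of radius $\epsilon/\sqrt{2}$ sits inside the $d_\Delta$ ball of radius $\epsilon$, which is the direction needed for the reduction; the paper's \myref{Lemma}{lem:extension} states the inclusion as $R \subseteq R'$, which appears to be a typo in the paper (the probabilistic argument there in fact requires $R' \subseteq R$), so do not mirror its phrasing too literally.
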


\begin{proof}
   Since the objective function $\errsub{\Delta}$ and the feasible region
   $\errsub{L^2} \leq \epsilon / \sqrt{2}$ are symmetric, the maxima and the
   minima of all $\Delta_{ij} = \paren{{\hat\theta}_i - {\hat\theta}_j} -
   \paren{{\theta}_i - {\theta}_j}$ are equal where $1 \leq i < j \leq m$. To
   find the maximum and the minimum of $\Delta_{ij}$, we use the method of
   Lagrange multipliers and optimizing the Lagrange function
   $L\paren{\vec{\hat\theta};\lambda} = \Delta_{ij} - \lambda\paren{\errsub{L^2}
   - \epsilon / \sqrt{2}}$. Solving the equations that $\partial L / \partial\,
   {\hat\theta}_i = 0, \,  i = 1, 2, \ldots, m$ and $\partial L / \partial
   \lambda = 0$ gives that the maximum and the minimum of $\Delta_{ij}$ is
   $\epsilon, -\epsilon$ respectively. Therefore the maximum of $\left|
   \Delta_{ij} \right|$ is $\epsilon$, which implies that $\errsub{\Delta} \leq
   \epsilon$.
\end{proof}

\section{Experiments}
\label{sec:experiments}

In this section, we evaluate the \textsc{$L^2$Miss} algorithm as its extensions
empirically and compare them with the state-of-the-art AQP algorithms. All the
experiments are performed on a Linux server with one 16-core, 32-thread CPU,
32GB memory and 2TB hard disk space.

\subsection{Evaluation Criteria}

We first introduce the criteria in the experiments to evaluate our algorithms as
well as other AQP algorithms in terms of both accuracy and efficiency
respectively.

For accuracy, we introduce \emph{simulated confidence} as an indicator of
whether the error constraint is satisfied. The simulated confidence for an
approximate query result is computed as follows: for a specific sample size
given by an AQP system, we draw a large number, i.e. $1000$ in general, of
samples of such size, compute the analytical results on each sample. The
simulated confidence, denoted by $\hat c$, is defined as the frequency that the
results satisfy the error bound. We claim that an AQP algorithm is accurate if
$\hat c \geq 1 - \delta$, meaning that the simulated confidence is no smaller
than the user-defined confidence.

For efficiency, we introduce two indicators, i.e., the running time and the
total sample size $C\paren{\vec{n}}$. Longer running time or larger total sample
size imply that the algorithm is less efficient. Since our algorithms are
stochastic, we repeat the experimental process for several times and report the
average and the standard deviation for both the running time and the total
sample size. Larger standard deviation indicates that the resulting sample size
may lie far away from the optimal one, making the algorithm less reliable.

Furthermore, we also would like to evaluate the effectiveness of our error
model. For that, we use the \emph{coefficient of
determination}~\cite{wasserman2006all}, also called the $r^2$ score, to measure
the \emph{goodness of fit} of a model. The range of $r^2$ score is $\left(
-\infty, 1 \right]$. The larger $r^2$ is, the better the model fits the data
(i.e., the error profile in this case).

\subsection{Applicability Evaluation}

In this section, we first demonstrate the broad applicability of the proposed
\textsc{$L^2$Miss} algorithm. The applicability of other MISS-based algorithms,
including \textsc{MaxMiss}, \textsc{NormalMiss}, and \textsc{DiffMiss}, is the
same since they do nothing but only call \textsc{$L^2$Miss} with different
parameters. We test our algorithm against different data distributions and
analytical functions to see whether the algorithm returns sample sizes
accurately, i.e., satisfying the user-defined error constraint. Specifically, we
consider totally six common analytical functions in the experiments,  including
not only aggregate functions supported by traditional databases, including mean
(\texttt{AVG}), variance (\texttt{VAR}), median (\texttt{MEDIAN}), and maximum
(\texttt{MAX}), but also widely-used machine learning algorithms such as linear
regression (\texttt{LINREG}) and logistic regression (\texttt{LOGREG}). We
consider the following data distributions: the standard normal distribution
(Normal), the exponential distribution with scale $1$ (Exp), the uniform
distribution on range $\left[0, 1\right]$ (Uniform), and the Pareto distribution
with scale $1, 2$ and $3$ (Pareto1, Pareto2, Pareto3).

For some analytical functions, such as \texttt{MEDIAN} and \texttt{MAX}, and
some data distributions, such as Pareto1 and Pareto2, there is no closed-form
error estimation method that can be applied. And some others, for example,
\texttt{MAX}, Pareto1, and Pareto2, the bootstrap is not able to provide
consistent error estimation, as pointed out in \autoref{sec:estimation}.
Therefore, in the following experiments, we will test our algorithm against such
extreme cases in order to see whether the results are consistent with the
theory.

\subsubsection{Analytical Functions and Data Distributions}
\label{sec:single}
In the first part, we evaluate our \textsc{$L^2$Miss} algorithm for various
combinations of the analytical functions and data distributions described above.
We denote each cases by a function-distribution name pair. For each case,we
examine the simulated confidence $\hat c$ and the $r^2$ score to see the
accuracy of the algorithm and the effectiveness the model respectively.

Moreover, we set the parameter of \textsc{$L^2$Miss} in \autoref{alg:l2Miss} in
the following way. $\epsilon$ is set to be a relative error bound $\epsilon^*$
times the value of the true analytical result, which is computed prior to the
experiment, and $\epsilon^*$ is set to $0.05$ for \texttt{LOGREG} and $0.01$ for
other analytical functions. We set $\delta = 0.05$, $B = 500$, $I_n =
\bracket{4000, 8000}$, and $l = 20$. The number of tuples in this experiment is
$100$ million.

\begin{figure*}[ht]
    \centering
    \includegraphics{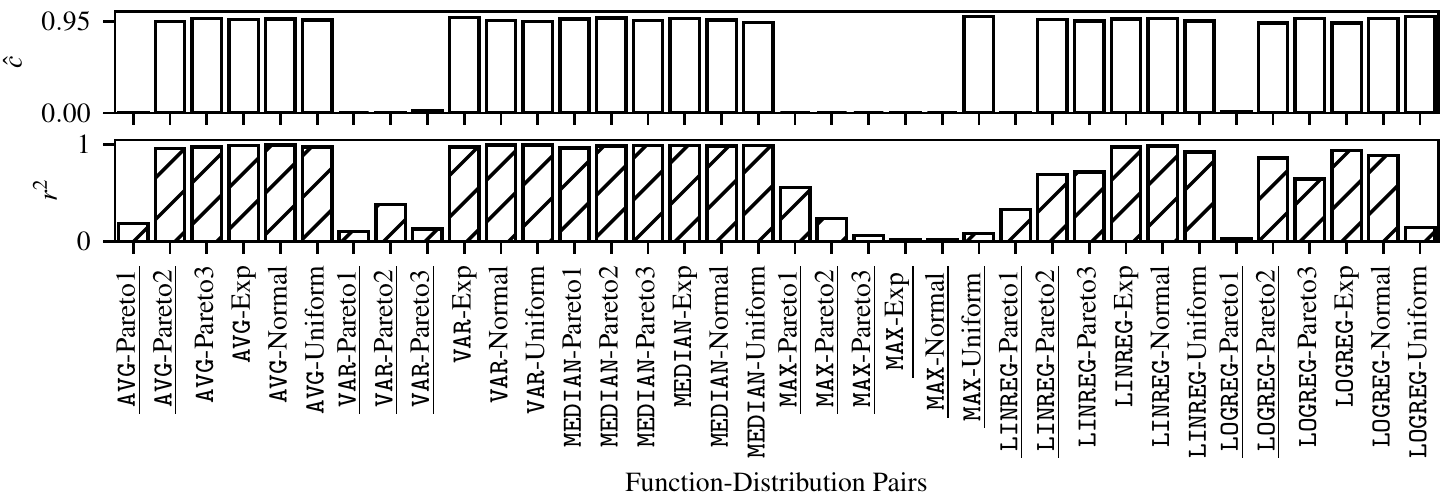}
    \caption{Applicability Evaluation for Function-Distribution Pairs. Cases where the bootstrap is theoretically inconsistent are underlined.}
\label{fig:single}
\end{figure*}

\autoref{fig:single} shows our results for all function-distribution pairs
described above. The function-distribution pairs underlined are the cases where
the bootstrap cannot estimate the approximation error correctly (i.e.,
\myref{Lemma}{lem:bootstrap} does not apply). The upper plot shows the simulated
confidence for all cases, and the lower plot demonstrates the corresponding
$r^2$ scores. Ideally, the simulated confidence should be close to $1 - \delta =
0.95$, and the $r^2$ score should be close to $1$ when the algorithm performs
satisfactorily. Otherwise, the simulated confidence would be far away from
$0.95$, which indicates that the algorithm produces sample size either too small
to be accurate, or too large to be efficient, and the $r^2$ score would be far
away from $1$.

As seen in the results, our \textsc{$L^2$Miss} algorithm demonstrates its board
applicability on a range of data generating distributions and analytical
functions. Specifically, the algorithm finds optimal sample size accurately for
$25$ out of $36$ cases. When the bootstrap guarantee to be consistent, i.e., the
$21$ cases that are not underlined, the simulated confidence of our algorithm is
close to $0.95$ and the $r^2$ score is close to $1$, meaning that it finds the
optimal sample sizes accurately for these cases. For the cases that the
bootstrap is not consistent, no closed-form method can be applied, while our
algorithm is still accurate for the $4$ out of $15$ cases, for example,
\texttt{AVG}-Pareto2. This is because not only that the variance of the
approximate analytical result is not infinity since the data size is finite, but
also that our algorithm uses an iterative approach to minimized the impact of
noise.

\subsubsection{Multi-group Data}
\label{sec:multi}
In the second part, we evaluate our algorithm on the data consisting of two
independent groups generated by two distributions respectively. Each case is
denoted by a distribution-distribution name pair. The distributions used here
are the same as that described in the beginning of this section, i.e., Pareto1,
Pareto2, Pareto3, Exp, Normal, and Uniform. We choose the analytical function to
be \texttt{AVG}, which is the most commonly used one. The data size is set to be
$100$ million tuples for each group. For the parameters of the
\textsc{$L^2$Miss} algorithm, we choose the relative error bound $\epsilon^* =
0.01$ and others remain the same as the previous experiment described in
\autoref{sec:single}.

\begin{figure}[ht]
    \centering
    \includegraphics{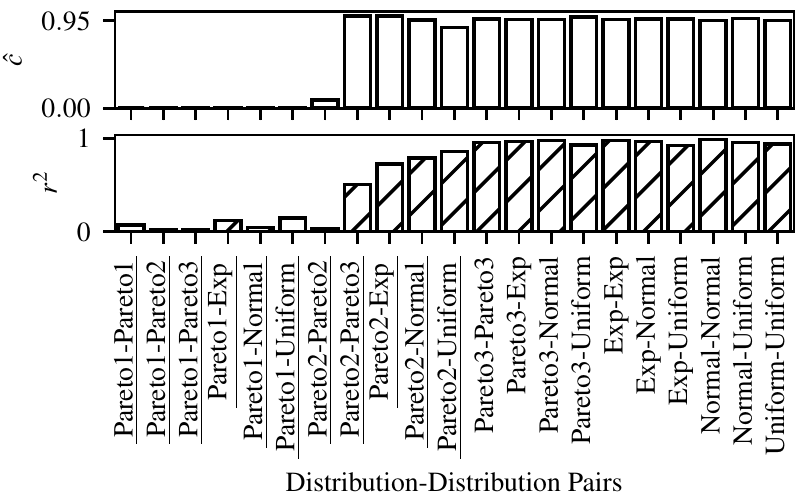}
    \caption{Applicability Evaluation for Averages of Distribution Pairs. Cases where the bootstrap is theoretically inconsistent are underlined.}
\label{fig:multi}
\end{figure}

\autoref{fig:multi} shows the results of the experiment for all pairs of
distributions. As described in \autoref{sec:single}, the underlined cases are
the ones that bootstrap is not consistent theoretically. Since the two groups
are generated independently, the bootstrap estimates the error consistently if
and only if it is consistent for each group of the pair. Also, like in
\autoref{fig:single}, we plot the simulated confidence $\hat c$ and the $r^2$
score for all cases to figure out whether our algorithm produces the accurate
sample sizes and whether our model depicts the relationship between the
approximation error and the sample size effectively.

As shown in \autoref{fig:multi}, the \textsc{$L^2$Miss} algorithm demonstrates
its capability to handle the $L^2$ norm error metric on various data.
Specifically, among the $21$ total cases, the algorithm finds optimal sample
sizes in $14$ cases. For the $10$ cases that the bootstrap is consistent, the
algorithm returns accurate results, and the error model fits the error profile
quite well. For other $11$ cases, such as Pareto2-Pareto3, even though the error
estimation method is inconsistent, and the $r^2$ score is only around $0.5$,
meaning that the model does not fit the profile perfectly, our algorithm also
manage to find the optimal sample size accurately. This again shows the strong
applicability of our algorithm in practice to almost all kinds of queries, even
to those that is theoretically inapplicable.

In conclusion, the two applicability experiments above demonstrate that the
\textsc{$L^2$Miss} algorithm can be applied to queries for which the error
estimation method, i.e., the bootstrap, is theoretically consistent. For other
queries that do not enjoy such a decent theoretical guarantee, few error
estimation methods can be applied in theory, while our algorithm may still work
well in practice. This gives users enough confidence to apply the algorithm to
build an AQP system that is general enough to be applied to almost all kinds of
analytical tasks, ranging from those as simple as finding averages to those as
complicated as regression and classification.

\subsection{Efficiency Evaluation}

In this section, we study the efficiency of our algorithms. We compare the
\textsc{$L^2$Miss} algorithm and its extensions against several state-of-the-art
sampling-based AQP algorithms based on different error estimation methods. The
AQP algorithms we evaluated include:

\begin{list}{\labelitemi}{\leftmargin=1em}\itemsep 0pt \parskip 0pt
    \item The \textsc{$L^2$Miss} algorithm described in \autoref{alg:l2Miss}
    based on our MISS framework and uses the bootstrap for error estimation.

    \item The \textsc{OrderMiss} algorithm described in \autoref{sec:ordering},
    which is also MISS-based and also employs the bootstrap to estimate errors.

    \item The Sample+Seek framework (SPS) proposed
    in~\cite{DBLP:conf/sigmod/DingHCC016}, which uses Chernoff-type bounds for
    error estimation.

    \item Our implementation of the sample selection algorithm of
    BlinkDB~\cite{DBLP:conf/eurosys/AgarwalMPMMS13} (BLK), which is based on the
    normality assumption to derive closed-form error estimation.

    \item The IFocus algorithm (IF) proposed
    in~\cite{DBLP:journals/pvldb/KimBPIMR15}, which uses Hoeffding's
    inequalities~\cite{Hoeffding1963} for error estimation.
\end{list}

Among the five algorithms above, \textsc{$L^2$Miss} is compared with SPS and BLK
since they all work with the $L^2$ norm error metric. \textsc{OrderMiss} is
compared with IF, since both provide ordering guarantees. For \textsc{$L^2$Miss}
and \textsc{OrderMiss}, since they require bootstrapping, we expect that their
running time will be larger given that the total sample sizes are almost the
same. SPS uses measure-biased sampling, which requires full scans on the data.
Therefore, the running time will grow as the data size increases. BLK uses
ad-hoc error estimation methods such that no estimation error exists. Therefore,
BLK can be viewed as the best method as long as it can be applied.

The dataset is the TPC-H~\cite{tpc-h} and the factors evaluated include the
relative error bound $\epsilon^*$, the error probability $\delta$, the number of
groups of the data $m$, and the size of the dataset $N$. We define each query to
have only one group-by attribute and one analytical attribute. In the queries,
the relative error bound is varies from $0.01$ to $0.002$ and the error
probability $\delta = 0.1, 0.05, 0.01, 0.005, 0.001$ to evaluate their impact on
performance. Moreover, tn order to evaluate the performance for multi-group
queries, the group-by attributes used are \texttt{LINESTATUS},
\texttt{RETURNFLAG}, \texttt{SHIPINSTRUCT}, \texttt{LINENUMBER}, and
\texttt{TAX} such that the numbers of groups of the data are $2$, $3$, $4$, $7$,
and $9$ respectively. The analytical attribute is \texttt{EXTENDEDPRICE}. Each
query involves only one group-by attribute and one analytical attribute. The
data size is changed by modifying the scale factor of TPC-H. Specifically, $N$
is approximately the scale factor times $6\times 10^{6}$. The scale factors used
include $1$, $10$, $30$, and $100$, which are officially designated by the TPC-H
specification.

To measure efficiency, as mentioned at the beginning of this section, we measure
both the running time and the total sample size $C\paren{\vec{n}}$ for each
algorithm. As we always concern about accuracy, we also compute and plot the
simulated confidence $\hat c$. To obtain more reliable results, when evaluating
one factor, we control the others to keep as default. The default values are:
$\epsilon^* = 0.01$, $\delta = 0.05$, $m = 1$, $N = 6 \times 10 ^ 6$， i.e., the
scale factor $= 1$, and $f = \texttt{AVG}$, which is supported by all the five
algorithms. For other parameters, $B = 500$, $I_n = \left[1000, 2000\right]$,
and $l = 5\paren{m + 1}$ where $m$ is the number of groups.

In the following experiments, all the algorithms are implemented by us in Python
with multi-core support. The data and the index are loaded into the memory
lazily (i.e., when they are accessed) for better performance. We use
\texttt{mmap(2)}~\cite{kerrisk2010linux} to randomly access specific rows
efficiently. All intermediate results are stored in memory to simplify our
experiments.

\subsubsection{{\large $L^2$} Norm Error}

First, we evaluate the performance of \textsc{$L^2$Miss} against SPS and BLK
under the $L^2$ norm error metric. The results are shown in
\autoref{fig:normal}. From the figure, we can learn that all algorithms achieve
satisfactory accuracy in all test cases as the simulated confidence is around or
above $1 - \delta$.

\begin{figure*}[ht]
    \centering
    \includegraphics{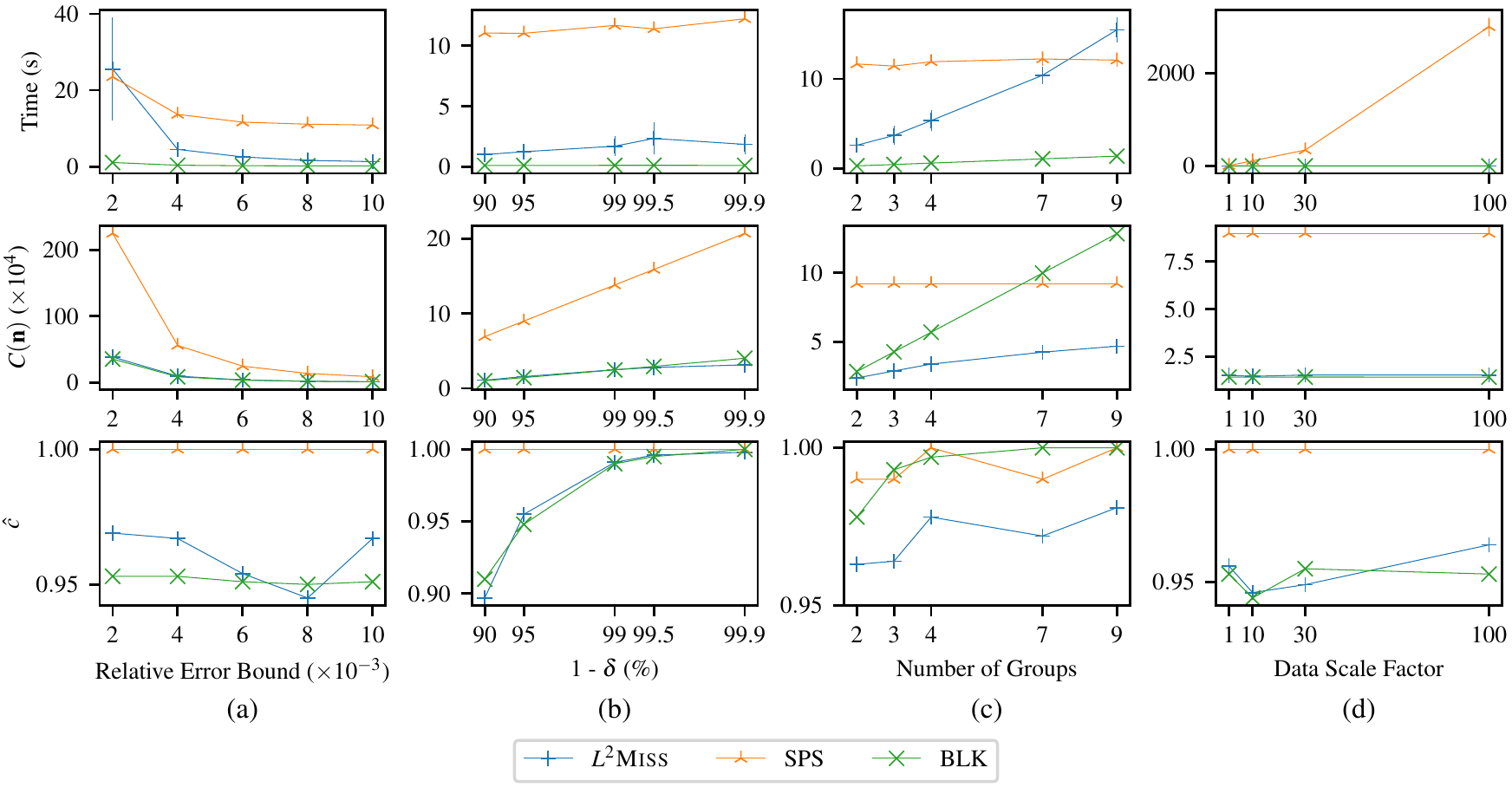}
    \caption{Efficiency evaluation for algorithms that work under the $L^2$ norm error metric}
\label{fig:normal}
\end{figure*}

\header{Relative error bounds.} \autoref{fig:normal}(a) shows the effect of
varying relative error bound $\epsilon^*$. The running time and the total sample
size of SPS is considerably larger than those of \textsc{$L^2$Miss} and BLK as
expected. However, even though the total sample size of \textsc{$L^2$Miss} and
BLK is quite similar, showing that \textsc{$L^2$Miss} is near-optimal in terms
of total sample size. However, the running time of \textsc{$L^2$Miss} grows
significantly faster than the others as the relative error bound decrease. This
also coincides with our expectation since the bootstrap in \textsc{$L^2$Miss} is
rather expensive. When the sampling rate is large, for example, at $~0.1$ when
$\epsilon^* = 0.002$, \textsc{$L^2$Miss} has no advantage over SPS.

\header{Error probabilities.} \autoref{fig:normal}(b) shows the performance with
different error probabilities. This is similar to the case of varying
$\epsilon^*$, i.e., \textsc{$L^2$Miss} is close to BLK, both in the running time
and the total sample size, while they are both more than 3x faster than SPS.
This shows that, when the sampling rate is small, for example, $\leq 0.01$,
bootstrap-based methods can still be very efficient compared with those
requiring full scans.

\header{Number of groups.} \autoref{fig:normal}(c) shows the performance on the
data of different numbers of groups. As the number of group increases, both the
running time and the total sample size of \textsc{$L^2$Miss} and BLK all
increase while they remain the almost the same for SPS. This is because that
\textsc{$L^2$Miss} and BLK consider groups separately, while SPS treats all the
groups as a whole. Furthermore, the total sample size of BLK is considerably
larger than that of \textsc{$L^2$Miss}. This is due to our naive implementation
of BLK that we let the errors of all groups be the same, while we let
\textsc{$L^2$Miss} determine the error distribution automatically to minimize
the total sample size.

\header{Data sizes.} \autoref{fig:normal}(d) shows the effect on efficiency for
different data sizes. As the data size grows, the running time and the total
sample size increase rapidly for SPS, while for \textsc{$L^2$Miss} and BLK, they
remain nearly unchanged. Note that when the data scale factor is $100$, the
assumption that all intermediate results can be fit into memory fails to hold
for SPS, resulting in a surge in the running time to around $3000$ seconds.
(Note that at this point we only draw several samples to compute the simulated
confidence for SPS since it is too time-consuming.) The reason is that SPS
requires access all the data, while \textsc{$L^2$Miss} and BLK only need to
access the sample data they draw. Therefore, when the sampling rate is small,
\textsc{$L^2$Miss} and BLK are much more efficient than SPS.

In conclusion, our \textsc{$L^2$Miss} algorithm indeed finds optimal sample
sizes compared with BLK and achieves satisfactory efficiency when the sampling
rate is small, e.g., $\leq 0.01$, when the user would like to trade more
accuracy for efficiency. On the other hand, SPS performs well when the sampling
rate is large, e.g., $> 0.1$, since it requires full scans. As for BLK, even
though its performance seems perfect for all cases, it only supports simple
aggregations such as \texttt{AVG}, \texttt{COUNT}, and \texttt{VAR}, while MISS
supports almost all kinds of analytical functions.

\subsubsection{Other Error Metrics}

In this section, we focus on evaluating whether our extensions effectively
supports other error metrics. Specifically, we focus on the performance of
\textsc{OrderMiss} providing ordering guarantees compared with IF. To the best
of our knowledge, IF is the only existing algorithm providing such guarantees.

Note that since different groups in the TPC-H dataset is rather identical in
terms of their analytical results. Therefore, guaranteeing correct ordering
properties would consume too much time and too large samples. Therefore, we add
a bias to each group such that the analytical results of any two groups differ
in a specific amount relative to the true analytical results, which are called
the group bias. The group bias is set to $0.05$ by default, meaning that the
difference of the analytical results in two adjacent groups in sorted order is
about $0.05$ times their true results.

\begin{figure}[ht]
    \centering
    \includegraphics[width=\linewidth]{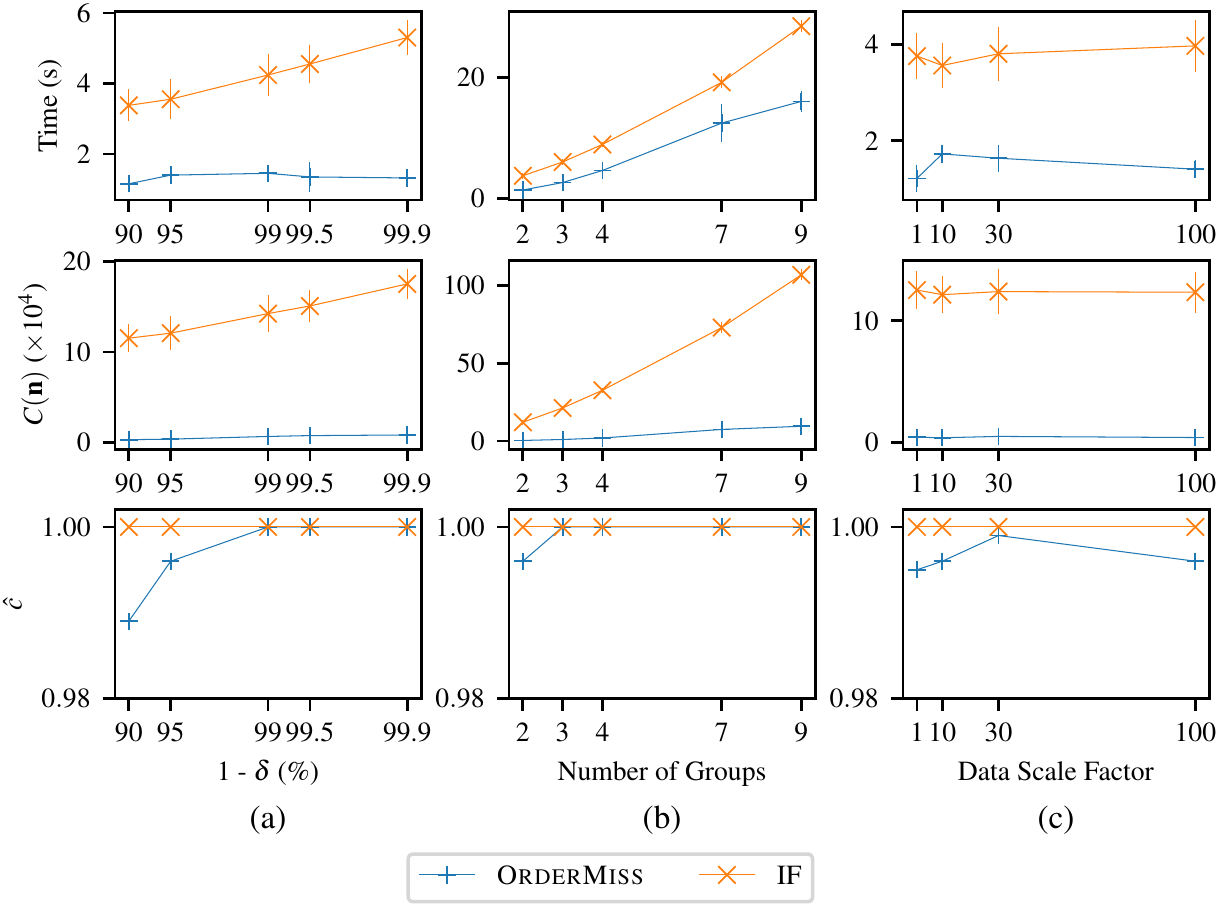}
    \caption{Efficiency evaluation for algorithms providing ordering guarantee}
\label{fig:order}
\end{figure}

The results are presented in \autoref{fig:order}. As observed from the third row
of \autoref{fig:order}, both \textsc{OrderMiss} and IF are able to produce
accurate results, i.e., the correct ordering property holds with the given
confidence. This indicates that our scheme, i.e., error bound conversion,
successfully extends \textsc{$L^2$Miss} to support other error metrics.

In terms of efficiency, we can learn from \autoref{fig:order} that for different
error probabilities, numbers of groups and data sizes, even though the two
algorithms behave similarly in trend, our \textsc{OrderMiss} algorithm
demonstrate its superiority against IF. Even though the bootstrap used by
\textsc{OrderMiss} is costly, \textsc{OrderMiss} is still faster than IF since
the total sample size of the former is several times smaller than the latter.

In summary, \textsc{OrderMiss} is much more efficient in providing ordering
guarantees compared with IF, yet still ensuring accuracy. This is because MISS
uses bootstrapping for error estimation, which is more accurate than
concentration inequalities used by IF, resulting in great reduction in total
sample size and the running time.

\section{Related Work}
\label{sec:related}
In this section, we survey the algorithms related to AQP briefly.

\header{Sample Selection for AQP.} Since
BlinkDB~\cite{DBLP:conf/eurosys/AgarwalMPMMS13} was ever proposed,
sampling-based AQP systems have been a hotspot in recent years in both industry
and academia. Notable systems besides BlinkDB include
iOLAP~\cite{DBLP:conf/sigmod/ZengAS16},
Sample+Seek~\cite{DBLP:conf/sigmod/DingHCC016}, and
SnappyData~\cite{DBLP:conf/cidr/MozafariRMMCBB17}. However, only those who
employ closed-form error estimation methods select suitable samples
automatically, which is far from optimal or even inapplicable for many queries.
Others, which use numerical error estimation methods, e.g., the bootstrap, do
not equip with such mechanism, and users are required to create or select their
desired samples manually. For example,
SnappyData\cite{DBLP:conf/cidr/MozafariRMMCBB17}, which adopts the latter
approach, requires users to create samples manually. It would select the largest
one if more than one samples are available~\cite{snappydata}. This approach
limits its use to only experts. Our approach, on the contrary, determines the
optimal sample size and draws the sample for each given query automatically,
which makes it much easier to use.

\balance
\header{AQP for complex queries.} As noted in \autoref{sec:problem},  we only
consider simple analytical queries without selection and join in general in this
paper. However, these two types of queries are essential to build a practical
AQP system. Fortunately, our MISS framework is flexible enough to incorporate
state-of-the-art techniques to gain such abilities. For joins,
WanderJoin~\cite{DBLP:conf/sigmod/0001WYZ16} is also a sampling-based technique
that estimates the sampling probability for each tuples and relies on
closed-form methods to estimate the approximation error for a analytical query.
This method can be perfectly collaborated with our approach to produce overall
approximate results directly to the end user. For selection,
Sample+Seek~\cite{DBLP:conf/sigmod/DingHCC016} proposes an indexing scheme that
accelerate selections in general using inverted indexing. This method is also
compatible with our MISS framework and therefore can be directly incorporated
into MISS to enable the MISS-based algorithms sample through an pre-built index.

\header{AQP without sampling.} Other AQP approaches without sampling include
data cubes~\cite{DBLP:conf/sigmod/HellersteinHW97} and
sketches\cite{DBLP:journals/cacm/Cormode17} are also widely used. Nevertheless,
these approaches are usually particularly designed for specific scenarios and
have their own limitations. For example, data cubes are used when both the data
and the query would not change much over time such that the pre-computed results
are still valid. Sketches are for data streams and can only be applied to
specific types of queries. On the contrary, our MISS framework and its
derivative algorithms are designed to be suitable for as many types of queries
as possible. By using online sampling and bootstrapping, our approach make
almost no assumptions on the data and the query performed.

\section{Conclusion and Future Work}
\label{sec:conclusion}

In this paper, we propose the Model-based Iterative Sample Selection (MISS)
frameworks and a family of algorithms based on the framework to find optimal
sample sizes for various error metrics based on the error model that we build.
Our approach not only supports $L^2$ norm error metric, but also provides other
types of guarantees including bounded maximum error, bounded difference error
and correct ordering under the MISS framework. With minor modifications, our
algorithm can support various types of total sample size including uniformly or
non-uniformly linear, polynomial and even exponential. We show theoretically and
empirically that our algorithms can find near-optimal samples that satisfy
user-defined error constraints for almost all kinds of data and user-defined
analytical functions.

For future work, we see two directions that worth exploring. The first is to
establish rigorous theories on what specific kinds of queries that our error
model can be applied to, or even to find better error models. And the second is
to build a practical and easy-to-use AQP system that incorporates our approaches
and other techniques discussed in \autoref{sec:related} to support arbitrary
queries facing directly to end users.


\balance

\bibliographystyle{abbrv}
\bibliography{mass}

\end{document}